\newif\iftechrep
\g@addto@macro{\UrlBreaks}{\UrlOrds}
\newcommand{\vincent}[1]{{\color{red}{Vincent: #1}}}
\newcommand{\deepal}[1]{{\color{blue}{Deepal: #1}}}
\newcommand{\cref}[1]{{\S\ref{#1}}}
\newcommand{\remove}[1]{}
\newenvironment{smallitem}{
\begin{itemize}[%
leftmargin=10pt,
labelsep=5pt,
rightmargin=3pt,
labelwidth=0pt,
itemindent=2pt,
listparindent=5pt,
topsep=4pt plus 2pt minus 4pt,
partopsep=4pt,
itemsep=6pt,
parsep=5pt
]
    \setlength{\parskip}{-1pt}
}{\end{itemize}}
\newtheorem{theorem}{Theorem}
\newtheorem{lemma}{Lemma}
\newtheorem{definition}{Definition}
\newenvironment{proof}{\noindent{\bf Proof.}}{\hfill$\Box${\vspace*{\medskipamount}}}
\definecolor{black}{RGB}{0,0,0}
\definecolor{gray}{RGB}{102,102,102}        
\definecolor{function}{RGB}{0,102,153}      
\definecolor{lightgreen}{RGB}{102,153,0}    
\definecolor{lightlightgreen}{RGB}{152,193,50}    
\definecolor{bluegreen}{RGB}{51,153,126}    
\definecolor{magenta}{RGB}{217,74,122}  
\definecolor{orange}{RGB}{226,102,26}       
\definecolor{purple}{RGB}{125,71,147}       
\definecolor{green}{RGB}{113,138,98}        
\definecolor{tomato}{RGB}{255,99,71}  
\definecolor{lightred}{RGB}{255,160,131}  
\lstdefinelanguage{parameterized}{
  firstnumber=1,
  xleftmargin=1.5em,
  numberstyle=\tiny\color{black},
  tabsize=2,
  numbers=left,
  morekeywords = [3]{require,while,if,then,else,do,done,wait,until,end,for,return,returns,upon,from,to,is,in},
  morekeywords = [4]{pragma,function,contract,new,true,false,null,and,or},
  morekeywords = [5]{bench,unlockAccount,createBatch,start_new_consensus,peek,
  propose,execute_transaction,HttpProvider,parse,on,catch,readFileSync,sendSignedTransaction,
  consensus_propose,mvc_propose,break,decide,poll,broadcast,deliver,add,greet,Hello,
  execute,sendTransaction,updateBlockState,executeTx,persist,deserialize,isValid,add},
  morekeywords = [6]{var,public,bytes32,bool,byte,+,=,:=,.,;,,,-,!,=,~,>,<,==,solidity},
  morekeywords = [7]{reliable_broadcast,binary_consensus,consensus},
  keywordstyle = [3]\color{bluegreen},
  keywordstyle = [4]\color{lightgreen},
  keywordstyle = [5]\color{magenta},
  keywordstyle = [6]\color{orange},
  keywordstyle = [7]\color{purple},
  sensitive = true,
  morecomment = [l][\color{gray}]{//},
  morecomment = [s][\color{gray}]{/*}{*/},
  morecomment = [s][\color{gray}]{/**}{*/},
  morestring = [b][\color{purple}]",
  morestring = [b][\color{purple}]',
  literate=
  {á}{{\'a}}1 {é}{{\'e}}1 {í}{{\'i}}1 {ó}{{\'o}}1 {ú}{{\'u}}1
  {Á}{{\'A}}1 {É}{{\'E}}1 {Í}{{\'I}}1 {Ó}{{\'O}}1 {Ú}{{\'U}}1
  {à}{{\`a}}1 {è}{{\`e}}1 {ì}{{\`i}}1 {ò}{{\`o}}1 {ù}{{\`u}}1
  {À}{{\`A}}1 {È}{{\'E}}1 {Ì}{{\`I}}1 {Ò}{{\`O}}1 {Ù}{{\`U}}1
  {ä}{{\"a}}1 {ë}{{\"e}}1 {ï}{{\"i}}1 {ö}{{\"o}}1 {ü}{{\"u}}1
  {Ä}{{\"A}}1 {Ë}{{\"E}}1 {Ï}{{\"I}}1 {Ö}{{\"O}}1 {Ü}{{\"U}}1
  {â}{{\^a}}1 {ê}{{\^e}}1 {î}{{\^i}}1 {ô}{{\^o}}1 {û}{{\^u}}1
  {Â}{{\^A}}1 {Ê}{{\^E}}1 {Î}{{\^I}}1 {Ô}{{\^O}}1 {Û}{{\^U}}1
  {Ã}{{\~A}}1 {ã}{{\~a}}1 {Õ}{{\~O}}1 {õ}{{\~o}}1
  {œ}{{\oe}}1 {Œ}{{\OE}}1 {æ}{{\ae}}1 {Æ}{{\AE}}1 {ß}{{\ss}}1
  {?}{{\H{u}}}1 {?}{{\H{U}}}1 {?}{{\H{o}}}1 {?}{{\H{O}}}1
  {ç}{{\c c}}1 {Ç}{{\c C}}1 {ø}{{\o}}1 {å}{{\r a}}1 {Å}{{\r A}}1
  {€}{{\euro}}1 {£}{{\pounds}}1 {«}{{\guillemotleft}}1
  {»}{{\guillemotright}}1 {ñ}{{\~n}}1 {Ñ}{{\~N}}1 {¿}{{?`}}1
}
\lstdefinelanguage{Solidity}{
        tabsize=2,
        numbers=left,
        stepnumber=1,
        showstringspaces=false,
	keywords=[1]{anonymous, assembly, assert, balance, break, call, callcode, case, catch, class, constant, continue, constructor, contract, debugger, default, delegatecall, delete, do, else, emit, event, experimental, export, external, false, finally, for, function, gas, if, implements, import, in, indexed, instanceof, interface, internal, is, length, library, log0, log1, log2, log3, log4, memory, modifier, new, payable, pragma, private, protected, public, pure, push, require, return, returns, revert, selfdestruct, send, solidity, storage, struct, suicide, super, switch, then, this, throw, transfer, true, try, typeof, using, value, view, while, with, addmod, ecrecover, keccak256, mulmod, ripemd160, sha256, sha3}, 
	keywordstyle=[1]\color{lightgreen}\bfseries,
	keywords=[2]{address, bool, byte, bytes, bytes1, bytes2, bytes3, bytes4, bytes5, bytes6, bytes7, bytes8, bytes9, bytes10, bytes11, bytes12, bytes13, bytes14, bytes15, bytes16, bytes17, bytes18, bytes19, bytes20, bytes21, bytes22, bytes23, bytes24, bytes25, bytes26, bytes27, bytes28, bytes29, bytes30, bytes31, bytes32, enum, int, int8, int16, int24, int32, int40, int48, int56, int64, int72, int80, int88, int96, int104, int112, int120, int128, int136, int144, int152, int160, int168, int176, int184, int192, int200, int208, int216, int224, int232, int240, int248, int256, mapping, string, uint, uint8, uint16, uint24, uint32, uint40, uint48, uint56, uint64, uint72, uint80, uint88, uint96, uint104, uint112, uint120, uint128, uint136, uint144, uint152, uint160, uint168, uint176, uint184, uint192, uint200, uint208, uint216, uint224, uint232, uint240, uint248, uint256, var, void, ether, finney, szabo, wei, days, hours, minutes, seconds, weeks, years},	
	keywordstyle=[2]\color{bluegreen}\bfseries,
	keywords=[3]{block, blockhash, coinbase, difficulty, gaslimit, number, timestamp, msg, data, gas, sender, sig, value, now, tx, gasprice, origin},	
	keywordstyle=[3]\color{function}\bfseries,
	identifierstyle=\color{black},
	sensitive=false,
	comment=[l]{//},
	morecomment=[s]{/*}{*/},
	commentstyle=\color{gray}\ttfamily,
	stringstyle=\color{purple}\ttfamily,
	morestring=[b]',
	morestring=[b]"
}
\newcommand\TextSize{\fontsize{8.5}{9.5}\selectfont}
\newcommand*\ttt{\TextSize\ttfamily\SetTracking{encoding=*}{-60}\lsstyle}
\newcommand\footnoteref[1]{\protected@xdef\@thefnmark{\ref{#1}}\@footnotemark}
\begin{document}
\date{}

\newcommand{\solution}{CollaChain\xspace}
\newcommand{\Middleware}{Middleware\xspace}
\newcommand{\middleware}{middleware\xspace}
\newcommand{\consensus}{DBFT\xspace}

\title{\Large \bf \solution: A BFT Collaborative \Middleware for Decentralized Applications
} 

\author{Deepal Tennakoon\\ dten6395@uni.sydney.edu.au \and Yiding Hua\\ yhua7740@uni.sydney.edu.au \and Vincent Gramoli\\ vincent.gramoli@sydney.edu.au}

\maketitle
\thispagestyle{plain}
\pagestyle{plain}

\begin{abstract}
The sharing economy is centralizing services, leading
to misuses of the Internet: we can list growing damages of data hacks, global outages, and even uses of data to manipulate their owners. Unfortunately, there is no decentralized web where users can interact peer-to-peer in a secure way. Blockchains incentivize participants to individually validate every transaction and impose
their block to the network. As a result, the validation of smart contract requests is computationally intensive while the agreement on a unique state does not make full use of the network.

In this paper, we propose CollaChain, a new byzantine
fault tolerant blockchain compatible with the largest ecosystem of DApps that leverages collaboration. First, the participants executing smart contracts collaborate to validate the transactions, hence halving the number of validations required by modern blockchains (e.g., Ethereum, Libra). Second, the participants
in the consensus collaborate to combine their block proposal into a superblock, hence improving throughput as the system grows to hundreds of nodes. In addition, CollaChain offers the possibility to its users to interact securely with each other without downloading the blockchain, hence allowing interactions
via mobile devices. CollaChain is effective at outperforming the Concord and Quorum blockchains and its throughput peaks at 4500 TPS under a Twitter DApp (Decentralized Application) workload. Finally, we demonstrate CollaChain’s scalability by deploying it on 200 nodes located in 10 countries over 5 continents.
\end{abstract}

\section{Introduction}
As the number of misuses of Internet data grows, 
so does the need for 
decentralized \middleware rewarding individuals for sharing data. 
These misuses stem from a ``centralization'' of the web according to Turing-awardee Tim Berners-Lee, who contributed to the design of a decentralized alternative to his 30-year-old Web~\cite{MSH16}.
At its origin, the Web helped users communicate with each other through their desktop computers. In the 2000s users started sharing data through Google, Facebook, Microsoft, and Amazon. By 2025, the sharing economy---to which users can contribute with their mobile phone---is expected to represent \$335 billion~\cite{pwc15}. This centralization has severe drawbacks: it exposes data to leaks and hacks~\cite{Fac18} and
it facilitates user manipulation~\cite{Pra18}.

Decentralized applications, or \emph{DApps} for short, are increasingly popular at allowing users to trade services peer-to-peer without transferring ownership.
In the third quarter of 2020, DApps transaction volume experienced an 8-fold increase reaching \$125B~\cite{Dap20}. With 96\% of this volume occurring on top of the Ethereum blockchain~\cite{wood2014ethereum} alone, most DApps are, however, plagued by Ethereum capacity capped at $\sim$15 transactions per second (TPS)~\cite{CHEN2019101055}.
Even without forcing miners to resolve a crypto-puzzle to obtain a proof-of-work, the proof-of-authority alternative of Ethereum delivers $\sim$80\,TPS~\cite{LCG20}.
Ethereum 
suffers from (i)~an expensive validation during the execution of smart contracts as well as (ii)~an incentive for consensus participants to compete into a fierce battle that aims at imposing their block to the rest of the system.
As a result, Ethereum already experienced congestion due to DApps~\cite{Cry17} and its capacity is inherently 
too low to support a DApp with the 4000+\,TPS throughput of Twitter~\cite{LSM11}.

In this paper, we propose \solution,
a new collaborative byzantine fault tolerant \middleware designed for decentralizing the sharing economy.
It is compatible with the largest ecosystem of DApps as it runs an adapted version of the Ethereum Virtual Machine (EVM), called Scalable EVM (SEVM), and decouples the traditional blockchain design into an (i)~SEVM component that divides the validation of EVM nodes by two as the number $n$ of blockchain participants tends to infinity and (ii)~a distributed consensus component that 
 potentially decides a number of transactions proportional to $n$,
hence addressing the two aforementioned limitations to scale as $n$ increases. 
Although a SEVM node also validates transactions eagerly to mitigate denial-of-service (DoS) attacks, we limit the number of nodes 
receiving this transaction to achieve (i).
Consensus participants collaborate to combine
their proposal into the same \emph{superblock} decision,
as opposed to the competitive classic consensus approach, to achieve (ii). 

Perhaps more importantly, a user of \solution, who wants to ensure its DApp can access a consistent state of the blockchain without trusting a central entity, does not need to download the blockchain.
\solution simply requires a web-based or mobile app
and leverages the upper-bound $f<n/3$ on the number of arbitrary (byzantine) failures among $n$ servers to reach consensus~\cite{LSP82}.
Other blockchains typically require users to either download block headers to verify that the current state is consistent, a process called ``synchronizing'', before issuing requests~\cite{eth-docs-clients} or trust a central entity, which defeats the blockchain purpose.
Ethereum fast synchronization requires more than 280\,GB of free storage space and takes 4 hours on average 
on an i3.2xlarge AWS EC2 instance with 8 vCPUs, 61\,GiB memory and 1.9\,TiB NVMe SSD~\cite{eth-fast-sync}, a task nearly impossible for any mobile device.
One may think of downloading less information with a ``light'' synchronization, however, the corresponding validation 
cannot guarantee that the blockchain is correct
due to incomplete blockchain records~\cite{eth-sync-modes}. 

\solution achieves 2K\,TPS when deployed on 200 machines spread in 10 countries over five continents. 
As indicated in Table~\ref{table:comparison}, it outperforms the non-sharded blockchains that tolerate byzantine failures. Note that we discuss sharding as an orthogonal optimization in \cref{sec:rw}.
\solution's throughput is two orders of magnitude larger than the Ethereum capacity throughput and, although not reported in Table~\ref{table:comparison}, 
one order of magnitude faster than the 172\,TPS of the recent SBFT in a world-scale setting~\cite{GAG19}. 
To the best of our knowledge, the non-sharded blockchains that outperform \solution are the ones that only tolerate crash failures.
We compare \solution performance to Quorum and Concord 
(\cref{sec:comparison}), show the impact of validation reduction on performance using BlockBench~\cite{DWC17}, illustrate the scalability of \solution 
to hundreds of machines over 5 continents, and demonstrate a 4500\,TPS peak throughput when running the Twitter DApp of the {\sc Diablo} benchmarking framework~\cite{BGG21}.

\begin{table}[ht]
\setlength\tabcolsep{8pt}
\begin{tabular}{rlcc}
\toprule
Blockchain & Fault tolerance & Smart contract & TPS \\ 
\midrule
Ethereum v1.x~\cite{wood2014ethereum} & \cellcolor{lightred}probabilistic &  \cellcolor{lightlightgreen}Solidity & \cellcolor{tomato}15 \\ 
Avalanche~\cite{TR18} & \cellcolor{lightred}probabilistic & \cellcolor{lightred}C-Chain & \cellcolor{lightred}1.3K \\ 
Algorand~\cite{GHM17} &  \cellcolor{lightred}probabilistic &  \cellcolor{lightred}{\sc teal} & \cellcolor{lightred}1K \\ 
Hyperledger Fabric~\cite{ABB18}  & \cellcolor{tomato}crash &  \cellcolor{lightred}ChainCode  & \cellcolor{lightlightgreen}3K \\ 
FastFabric~\cite{GLG19}  & \cellcolor{tomato}crash & \cellcolor{lightred}ChainCode & \cellcolor{lightlightgreen}20K \\ 
Cosmos/Ethermint~\cite{CFM21} & \cellcolor{lightlightgreen}byzantine & \cellcolor{lightlightgreen}Solidity & \cellcolor{tomato}438 \\ 
Burrow~\cite{burrow} & \cellcolor{lightlightgreen}byzantine & \cellcolor{lightlightgreen}Solidity & \cellcolor{tomato}765 \\ 
SBFT~\cite{GAG19}  & \cellcolor{lightlightgreen}byzantine & \cellcolor{lightlightgreen}Solidity &  \cellcolor{tomato}378 \\ 
Stellar~\cite{LLM19} & \cellcolor{lightlightgreen}byzantine & \cellcolor{lightred}SSC & \cellcolor{tomato}100 \\
\solution & \cellcolor{lightlightgreen}byzantine  & \cellcolor{lightlightgreen}Solidity & \cellcolor{lightlightgreen}4.5K \\ 
\bottomrule
\end{tabular}
    \vspace{0.5em}
\caption{Comparison of (non-sharded) blockchains with potential DApp support. 
The performance of Ethereum was taken from~\cite{CHEN2019101055}, the performance of Hyperledger Fabric and FastFabric was taken from~\cite{GLG19} (their BFT orderer~\cite{SBV18} is not listed as it does not
make these whole blockchains tolerate byzantine failures). 
Peak smart contract performance of SBFT was measured in a continent-scale setting; it actually achieves 172\,TPS in a world scale setting~\cite{GAG19}.
The throughput of Burrow is taken from~\cite{SNG20}.
The performance of Avalanche was taken from~\cite{TR18}, even if recent claims announced higher values obtained in unknown conditions.
"Solidity" indicates that the corresponding blockchain is compatible with the large ecosystem of Ethereum DApps.
We discuss the performance of sharded blockchains in Section~\cref{sec:sharding}. 
\label{table:comparison} 
}
\end{table}

\solution builds upon various results. 
The superblock optimization already appeared in a UTXO-based blockchain~\cite{CNG21}, however, \solution applies it to 
smart contracts by decoupling their execution and persistent storage into sub-tasks to avoid request losses as we will illustrate in \cref{ssec:store}.
%
Its SEVM nodes receive the requests from the clients and batch them into blocks that are sent to consensus nodes. 
Similar to an Ethereum~\cite{wood2014ethereum,But20} server (i.e., miner) or a Libra~\cite{BBC19} server (i.e., validator), an SEVM server of \solution  validates eagerly a transaction upon reception and  validates lazily the same transaction upon execution
(after the block that contains it is agreed upon).
In contrast with these blockchains, an SEVM node does not propagate the transactions to other nodes upon reception from the client, hence reducing the number of eager validations.
Similar to byzantine fault tolerant (BFT) replicated state machine protocols~\cite{BSA14,YMR19}, the consensus nodes decide on a unique batch of transactions without assuming synchrony as long as less than a third of consensus nodes are byzantine, which is resilient optimal~\cite{LSP82}.

In the remainder of the paper, we present our motivations and the necessary background (\cref{sec:bgrd}), as well as our goals and assumptions (\cref{sec:goal}).
We then present \solution (\cref{sec:solution}) 
that we prove correct, and evaluate it in a geodistributed setting 
and compare it against other blockchains (\cref{sec:evaluation}).
Finally, we present the related work (\cref{sec:rw}) and conclude (\cref{sec:conclusion}).
A smart contract to reconfigure the nodes of \solution is provided in Appendix~\ref{appendix}.

\section{Background and Motivations}\label{sec:bgrd}

\paragraph{Decentralized applications rationale}
Decentralized applications (DApps) alleviate many problems induced by the centralization of the sharing economy.
To mention a few, YouTube exprienced an outage~\cite{Bria20} that DTube could have remedied by sharing videos peer-to-peer~\cite{dtube19}.
Uber drivers feel manipulated by an opaque matching algorithm~\cite{MZ17}, whereas the DApp counterpart, called Drife, could offer transparency~\cite{drife}.
So what is the performance necessary to implement such a decentralized version of the sharing economy?
To answer this question, let us consider Twitter, which is a popular micro-blogging application.
Twitter experiences more than 4000\,tweets per second on average and its peak demand largely exceeds this number~\cite{LSM11}.
It is thus crucial for a mainstream decentralized \middleware to support thousands of transactions per second. Unfortunately most blockchains cannot (cf. Table~\ref{table:comparison}).


\paragraph{The redundant validations of Ethereum}
Ethereum~\cite{wood2014ethereum} features the Ethereum Virtual Machine (EVM) that was proposed in part to cope with the limited expressiveness of Bitcoin~\cite{Nak08} and to execute DApps written in a Turing complete programming language as \ms{smart contracts}. 
Go Ethereum, or {\ttt geth} for short, is the mostly deployed Ethereum implementation~\cite{gethstats}.
In order to check that a request (or transaction)
is valid, all of the {\ttt geth} \emph{servers} (i.e., miners) must validate twice each executed transaction:
\begin{itemize}
\item {\bf Eager validation:} This validation occurs 
upon reception of a new client transaction and checks the nonce value; that the sender account has sufficient balance; that the gas is sufficient to execute the transaction; and the transaction does not exceed the block gas limit, is signed properly and is not oversized. It reduces 
the effect of denial-of-service (DoS) attacks as an invalid transaction is dropped early. If the transaction is valid, it is propagated to other servers.
\item {\bf Lazy validation:} This validation occurs before transactions are executed in a decided block and simply checks the nonce and whether there is enough gas for execution.
This lazy validation is necessary to guarantee that transactions in a newly received decided block are indeed valid. 
The lazy validation is thus less time consuming in {\ttt geth} than the eager validation, 
this is why we focus on reducing the number of eager validations.
\end{itemize}

This is an overconservative strategy
because each to-be-executed transaction of {\ttt geth} is 
validated twice by each server. This is unnecessary as an invalid transaction coming from a byzantine node will either be dropped by lazy validation prior to execution or fail execution and the state reversed if there is an invalidity not checked by the lazy validation.

It is interesting to note, also, that in a system where few replicas are byzantine, there is no 
need for all servers to validate all transactions twice.
%
We explain in~\cref{sec:sevm} how, without reducing security, we reduce the number $k$ of eager validations per server down to $k/n$ to scale to a large system size $n$.

\paragraph{The inefficiency of byzantine fault tolerant consensus}
For security reasons, a blockchain must guarantee that nodes agree on a unique block at each index of the chain.
To cope with malicious (or \emph{byzantine}) participants, this requires solving the byzantine fault tolerant (BFT) consensus problem~\cite{PSL80}, where
every non-byzantine or \emph{correct} node eventually decides a value such that no two correct nodes decide differently.
Unfortunately, traditional consensus protocols solve this problem by electing a leader node that tries to impose its value to the other nodes~\cite{CL02,BSA14,BKM18,YMR19}.
While these leader-based designs proved effective in local area networks to deploy a secure version of the Network File System~\cite{CL02}, it 
generally cannot scale to 
large blockchain networks because only one value is decided~\cite{Buc16,VG21}, regardless of the number of proposed values in the system. 
Recent consensus implementations allowed to commit up to as many proposed values per consensus instance as participating nodes to scale performance~\cite{MSC16,KBBR21,CGLR18}.
However, some of these variants~\cite{MSC16} fail at solving consensus because their binary consensus protocol may not terminate~\cite{TG19}.
And the only blockchains that integrate a provably correct consensus implementation that combines block proposals into a \emph{superblock}
support simple transactions but cannot execute arbitrary programs or smart contracts~\cite{KBBR21,CNG21}. 
In~\cref{sec:sevm}, we will describe the obstacles we overcome to combine multiple proposals of smart contracts  creations and invocations 
during each consensus execution.
%
%
%

\section{Goals and Assumptions}\label{sec:goal}

We consider an \emph{open permissioned} blockchain model~\cite{MSC16,CNG21} in that a subset of the distributed machines have the permission to run 
the current instance of the consensus, or to execute smart contracts and transaction requests as well as to maintain the resulting state. 
This model is called ``open'' as permission can be revoked and we do not prevent a particular node from obtaining a permission later on: as opposed to Ethereum 
we simply prevent all nodes from providing the same service at the same time to avoid resource waste (\cref{ssec:open}).

We assume \emph{partially synchronous} communication and computation in that the upper bound on the time it takes for a step exists but is unknown~\cite{DLS88}.
For simplicity, we assume that each permissioned participant 
runs both a consensus node and a state node and that up to $f$ of these participants (and any of their nodes) can fail arbitrarily by being byzantine. In this case, we call such a participant as a \emph{blockchain node}.

\paragraph{The Blockchain problem}\label{ssec:bc}
We refer to the blockchain problem as the problem of ensuring both the
safety and liveness properties that were defined in the literature by Garay et al.~\cite{GKL15} and restated more recently by Chan et al.~\cite{CS20}, and a classic validity property~\cite{CNG21}.
\begin{definition}[The Blockchain Problem]\label{def:blockchain}
The \emph{blockchain problem} is to ensure that a distributed set of blockchain nodes 
maintain a sequence of transaction blocks such that the three following properties hold:
\begin{itemize}
\item  \emph{Liveness:} if a correct blockchain node receives a transaction, then this transaction will eventually be reliably stored in the block sequence of all correct blockchain nodes.
\item \emph{Safety:} the two chains of blocks maintained locally by two correct blockchain nodes are either identical or one is a prefix of the other. 
\item \emph{Validity:} each block appended to the blockchain of each correct blockchain node is a set of valid transactions (non-conflicting well-formed transactions that are correctly signed by its issuer).
\end{itemize}
\end{definition}
The safety property does not require correct blockchain nodes to share the same copy, simply because one replica may already have received the latest block before another receives it.
Note that, as in classic definitions~\cite{GKL15,CS20}, the liveness property does not guarantee that a client transaction is included in the blockchain: if a client sends its transaction request exclusively to byzantine nodes then byzantine nodes may decide to ignore it.

\paragraph{Our goal of a secure and efficient middleware for DApps}
Our goal is thus to support DApps, by allowing clients (i)~to access consistent data despite $f<n/3$ byzantine servers through all sorts of devices and (ii)~to serve a large demand generated by network effects
as follows:
\begin{enumerate}
\item {\bf Lightweight-security:} the users should be able to securely interact with the blockchain from various devices.
To access apps, users typically use handheld devices that cannot download blockchain histories due to resource constraints (Ethereum history exceeds 280\,GiB~\cite{eth-fast-sync}).
yet they need to interact securely despite unpredictable message delays.
\item {\bf Thousands-TPS:} the volume of transactions that can be served per second should prevent a backlog of requests that grows and leads to congestion. We know that DApps create congestion on Ethereum~\cite{Cry17} and EOS~\cite{edois}, and popular applications, like Twitter, exceed 4000 requests per second~\cite{LSM11}.
\end{enumerate}
Property (1) alleviates the need for clients to download the blockchain history, but requires them to interact securely, which is enabled by limiting the number of failures to $f$ and querying $f+1$ identical copies of the current state to retrieve the correct information as detailed in~\cref{sec:solution}.
Also, we know that to allow users to issue (potentially conflicting) transactions from distinct devices, we need to solve consensus~\cite{GKM19}.
Property (2) lower bounds the capacity to around 2000\,TPS to serve the demand of sharing applications. 
Although this might be insufficient to run multiple DApps, we explain in \cref{sec:sharding} how to shard \solution and deploy different DApps to different shards.
This would help \solution support many DApps smoothly, given its 4500\,TPS peak throughput (\cref{sec:dapp}).

\section{\solution}\label{sec:solution}

\solution is a collaborative blockchain compatible with the largest ecosystem of DApps, it is optimally resilient against byzantine failures. 
The layered architecture is depicted in Fig.~\ref{fig:architecture} with a Scalable EVM (SEVM) node at the top and a consensus node at the bottom that can be run on the same machine as a single \emph{blockchain node}.
The communication between the consensus node and the SEVM node is event-based and implemented with {\ttt gRPC}.  
Although this presents an execution overhead as both the consensus node and the SEVM node can typically execute on a single machine, this offers greater modularity.

\begin{figure}[t]
\begin{center}
\includegraphics[scale=0.59]{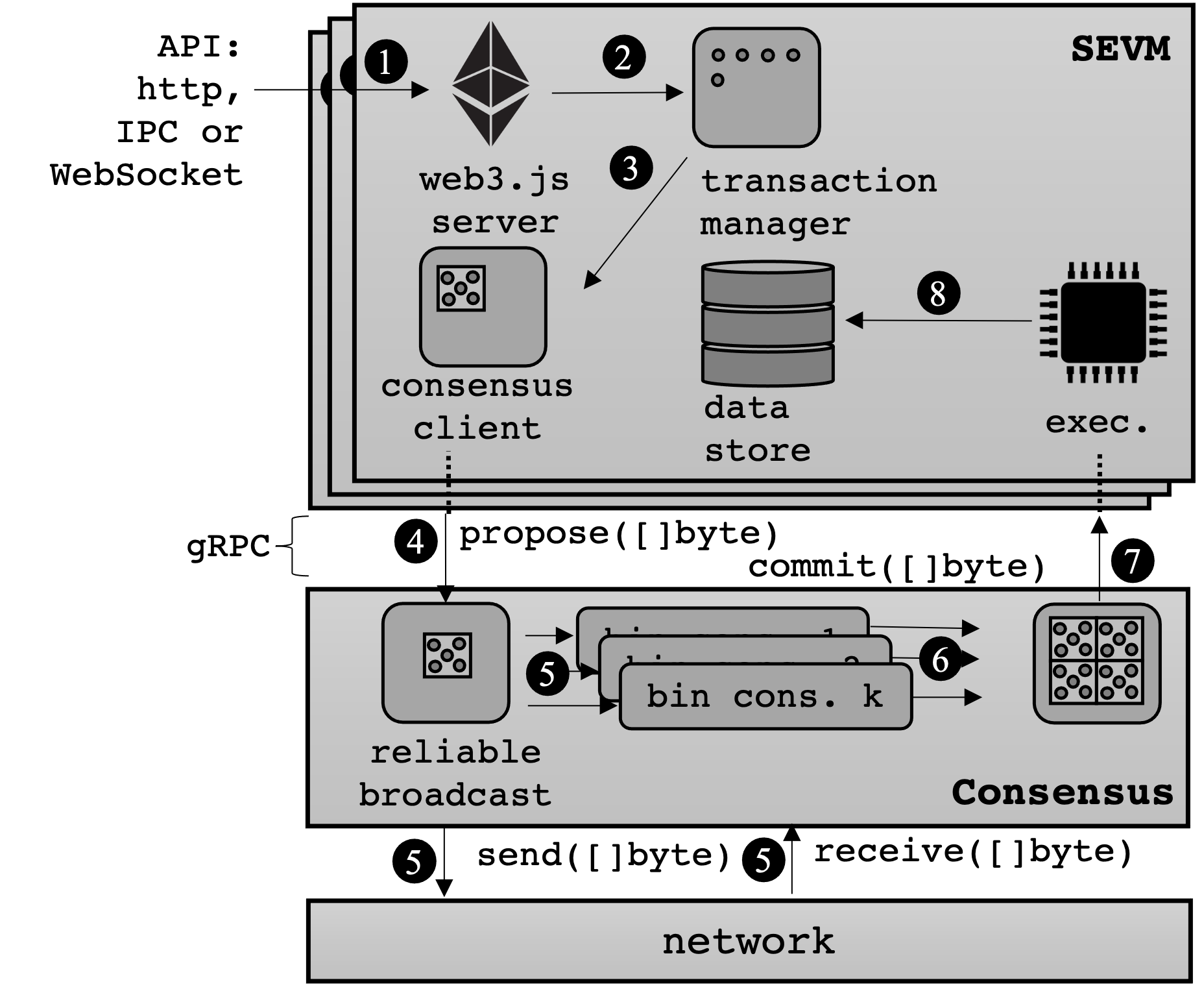}
\caption{The architecture of \solution. \ding{202} A client sends a transaction to some replica(s), 
\ding{203} 
at each replica the web3.js server validates transactions and sends them to the transaction manager that 
\ding{204} sends a block to the consensus client. \ding{205} The consensus client $\lit{propose}$s it to the consensus protocol. Upon reception of a new block from the consensus client, \ding{206} the consensus protocol sends it through the network with a reliable broadcast. Remote replicas start participating in the same instance (if not done yet) upon reliably delivering this proposed block. \ding{207} When the consensus outputs some acceptable blocks, \ding{208} all of these blocks are combined into a superblock and sent to the SEVM. \ding{209} 
As in the EVM, the SEVM is responsible for executing and storing blocks, except that the SEVM will store multiple blocks per consensus instance.
\label{fig:architecture}
}
\end{center}
\end{figure}

\solution offers a secure interface to lightweight clients (\cref{ssec:interface}) and to scale to a geodistributed network of $n$ blockchain nodes, \solution
reduces the time spent validating transactions as $n$ grows (\cref{sec:sevm}) and increases the amount of blocks committed per consensus instance as $n$ grows (\cref{sec:dbft}). Before discussing these, we present the \solution overview through the transaction lifecycle (\cref{sec:lifecycle}).

%

\subsection{The transaction lifecyle}\label{sec:lifecycle}

In the following, we use the term \emph{transaction} to indistinguishably refer to a simple asset transfer, the upload of a smart contract or the invocation of a smart contract function. 
The lifecycle of a transaction goes through these subsequent stages: 
\begin{itemize}
\setlength{\itemindent}{.56in}
\item[{\bf 1. Reception.}] The client creates a properly signed transaction and sends it to at least one \solution  node. Once a request containing the signed transaction is received \ding{202} by the JSON RPC server of the SEVM state machine running within \solution, the eager validation (\cref{sec:bgrd}) starts. 
If the validation fails, the transaction is discarded.
If the validation succeeds, the transaction is added to a transaction pool. Unlike in Ethereum where the transaction would be propagated 
to all miners increasing the number of eager validations, \solution simply proposes it to the consensus node as follows.
If the number of transactions in the pool reaches a threshold, 
then the transaction manager creates a new proposed block
with a number (defined by the threshold) of transactions from the pool \ding{203}.
It serializes and sends the proposed block to the consensus client \ding{204}. 
For the sake of our superblock optimization (cf.\cref{ssec:superblock}) and in contrast with Ethereum, the proposed block does not contain a hash just yet.
\item[{\bf 2. Consensus.}] Once the consensus client receives a proposed block, it sends the corresponding byte array to the consensus system by invoking the {\ttt propose([]byte)} method \ding{205}. The consensus system starts a new instance of consensus using the new block if it is not currently part of another consensus instance. Otherwise, it adds the new block to the block queue waiting for the current consensus instance to terminate. Like in classic reductions of the general consensus problem to the binary consensus problem~\cite{BCG93,BKR94}, \solution's consensus execution \ding{206} consists of  an all-to-all reliable broadcast of the blocks among all consensus replicas that trigger as many \emph{binary} consensus instances whose outputs indicate the indices of acceptable blocks \ding{207} as detailed in~\cref{sec:dbft}. The consensus system creates a superblock with all acceptable blocks (\cref{ssec:superblock}) and sends this superblock to the state machine by invoking the {\ttt commit([]byte)} method \ding{208}.
\setlength{\itemindent}{.43in}
\item[{\bf 3. Commit.}] When the superblock is received by the {\ttt gRPC} server running in the SEVM state machine, the superblock is first deserialized using JSON unmarshalling. The SEVM does the lazy validation (\cref{sec:bgrd}) before committing the deserialized transactions \ding{209}. Note that as opposed to the eager validation, all servers execute the lazy validation for a committed transaction, yet it does not prevent \solution from scaling to hundreds of nodes (\cref{sec:geodistributed}). Once the superblock is decided, each of its blocks are executed, their hash is included, their results are written to persistent storage on the local disk and the lifecycle ends.
\end{itemize}

\subsection{Secure interface for lightweight clients}\label{ssec:interface}
As opposed to classic blockchains, like Ethereum, \solution does not require the client\footnote{The term ``client'' is often used in Ethereum to refer to a node regardless of whether it acts as a server. We use client in the traditional sense of the client-server distinction~\cite{TvS07}.} interacting with the service to download the blockchain or its block headers.
Instead, \solution accepts connections from simple javascript-enabled browsers, as can be found on mobile devices~\cite{metamask}. Similar to {\ttt geth}, \solution supports a {\ttt web3.js} API that allows the user to communicate through {\ttt http}, IPC or websocket. 
Given that \solution tolerates $f$ failures, it is sufficient for the client to query the same copy of the world state from $f+1$ distinct blockchain servers to guarantee that this copy is consistent. And the client is guaranteed to 
find this copy at $f+1$ blockchain nodes by contacting $2f+1$ blockchain nodes by assumption. 
As a result, the client interacts securely with \solution without any blockchain records whereas an Ethereum ``light'' client cannot interact securely due to incomplete blockchain records~\cite{eth-sync-modes}. 
Hence \solution guarantees the \emph{Lightweight-security} property (\cref{sec:goal}).

\subsection{From the EVM to SEVM}\label{sec:sevm}

Here we present the modifications we made to the original EVM (and in particular {\ttt geth} v1.8.27) in order to obtain the \emph{Scalable EVM}, or \emph{SEVM} for short.
More specifically, provided that $k$ transactions are received by \solution, we reduce the average number $k$ of transactions each SEVM node eagerly validates 
to $k/n$.

\subsubsection{Reducing the transaction validations}\label{ssec:tx-validation}
As opposed to each Ethereum server that validates 
eagerly and lazily each of the $k$ transactions of the system, 
each of the $n$ \solution servers eagerly validates on average $k/n$ transactions.
Specifically, only one SEVM node  needs to eagerly validates each transaction: the first SEVM node receiving the transaction validates it but does not propagate it to other SEVM nodes but simply proposes it to the consensus. 
As a result, \solution limits the redundant validations, which improves performance. More precisely, if the number of SEVM nodes is $n$, then each SEVM node does $1+1/n$ validations per transaction on average (one lazy validation + $1/n$ eager validation) compared to the two validations needed in {\ttt geth}. As $n$ tends to infinity, \solution servers validate on average half what {\ttt geth} servers validate.
In the worst case, where all clients send their transactions to $f+1 = n/3$ servers simultaneously, then each server will still eagerly validate only $k/3$ transactions.

Note that, as a result of our optimization, a byzantine SEVM node could propose transactions to the consensus without validating them eagerly, in this case two things can happen: i) The transaction is discarded at the lazy validation if invalid ii) The SEVM attempts to execute the invalid transaction, fails at execution and reverses the state to what it was. Either way, there is no impact on the safety of the blockchain. This is also not a DoS vulnerability of \solution, as even a byzantine EVM node in Ethereum can propagate invalid transactions to all EVM nodes, forcing all EVM nodes to unnecessarily eager-validate them.

Finally, reducing the validation needed at each SEVM node helps \solution reach the \emph{Thousands-TPS} property (\cref{sec:goal}).

\subsubsection{Reliably storing superblocks}\label{ssec:storing}
At each index of the blockchain, our SEVM typically executes many more transactions as part of function {\ttt execute\_transaction} (lines~\ref{line:exec-tx-start}--\ref{line:exec-tx-end}) than Ethereum.
This is due to the consensus outputting through the {\ttt commitChan} channel a superblock containing potentially as many blocks as blockchain nodes (line~\ref{line:commit-chan}). 
In Ethereum, blocks are created before the consensus, thus {\ttt geth} updates only one block, by setting
its state parameters,
per consensus instance:
{\ttt updateBlockState} points a block to its parent, assigns the block header timestamp
and the number of transactions associated with a block. 
This function should thus be invoked for each block before the transactions of the block are executed and persisted, in order to ensure that the data structures are updated properly.
To store multiple blocks at the end of the consensus instance, 
we modified {\ttt geth} to 
{\ttt updateBlockState} (line~\ref{line:update-block-state}) multiple times 
per consensus instance (one invocation per block) as follows:

More specifically, we reordered the transactions (as disordered transactions could be discarded due to their invalid nonces) and changed the original procedure 
to guarantee that not only one block but all blocks of our superblock were correctly stored in the transaction and reception tries as a batch of $n$ blocks. 
Like the C++, python and {\ttt geth} software of Ethereum, 
we reliably store the information in the open source key-value store LevelDB (line~\ref{line:persist}).

\begin{lstlisting}[language=parameterized,basicstyle=\LSTfont,escapechar = ?,escapeinside={(*}{*)},frame = single]
execute_transaction: (*\label{line:exec-tx-start}*)
  // for each superblock received 
  for superblock in node.commitChan do (*\label{line:commit-chan}*)
    vtxs := (*$\emptyset$*) // set of valid transactions
    for block in superblock do // each block (*\label{line:block-loop-start}*)
      txs := node.txm.deserialize(block) // get txs
      for tx in txs:
        if isValid(tx): vtxs.add(tx) // lazy validation (*\label{line:validate}*)
      // set the corresponding block state and order txs
      updateBlockState() (*\label{line:update-block-state}*)
      for tx in vtxs do // for each valid tx...
        executeTx(tx) // ...execute it
      done
      persist(vtxs) // persist valid txs to disk (*\label{line:persist}*)
    done  (*\label{line:block-loop-end}*)
  done (*\label{line:exec-tx-end}*)
\end{lstlisting}

\subsubsection{SEVM support for fast-paced consecutive blocks}
Since our consensus system is fast, it creates and delivers superblocks at high frequency through the commit channel to the SEVM. As {\ttt geth} does not expect to receive blocks at such a high frequency, it raises an exception outlining that consecutive block timestamps are identical, which never happens in a normal execution of Ethereum.
This equality arose because  {\ttt geth} encodes the timestamp of each block as {\ttt uint64}, not leaving enough space for encoding time with sufficient precision. {\ttt geth} typically reports an error when consecutive timestamps are identical, due to a strict check that compares the parent block timestamp to the current block timestamp in {\ttt go-ethereum/consensus/ethash/consensus.go}: {\ttt header.Time < parent.Time}.
We changed the original check to {\ttt header.Time <= parent}, which allowed for fast-paced executions of consecutive blocks.

\subsubsection{Bypassing the SEVM resource bottlenecks}
After the consensus, the SEVM lazily validates many transactions, updates the memory 
and storage, 
which consumes high CPU, memory and IO resources. 
Typically, high CPU usage slows down the SEVM which results in the increase of the pending list of transactions. Once a threshold of pending transactions are reached, we observe transaction drops. 
This was evident in our superblock implementation. We observed that consuming each resource one after another, for 10 proposed blocks with a total of 15,000 transactions, would 
lead to losing transactions requests 
even on our reasonably-provisioned AWS instances featuring 16\,GB RAM and 4 vCPUs (Fig.~\ref{fig:superblock}).
This is why we made SEVM fully process one proposed block of the superblock at a time
allowing it to alternate frequently between CPU-intensive (verifying signatures and transaction executions) and memory-intensive (state trie write) and IO-intensive (reception/transaction tries writes) tasks.
Thanks to this optimized implementation of the superblock, SEVM does not experience bottlenecks as the number of nodes increases (cf.~\cref{sec:geodistributed}).

\subsection{A BFT Consensus for SEVM}\label{sec:dbft}

As opposed to smart contract blockchains that decide (at most) one of the proposed blocks,
\solution decides a superblock that results from its consensus system combining multiple proposed blocks into a single decision.
In the ideal case, agreeing on a superblock thus allows to commit $\Omega(n)$ blocks of distinct transactions at the end of a single consensus instance.

\subsubsection{Peer-to-peer network}

The peer-to-peer (P2P) network of the consensus system is implemented using golang's RPC package {\ttt net/rpc}. The consensus node reads consensus network configuration from a {\ttt yaml} configuration file upon initialization. The configuration file contains the network size $n$, a port number $p$ and a list of socket addresses specified in {\ttt ip:port} format. The consensus node sets up a gRPC server on port $p$ for consensus messages. The list of socket addresses are gRPC endpoints of consensus nodes. To prevent byzantine nodes from eavesdropping, all communications use TLS. We show that the overhead induced by the encryption layer of TLS is negligible in Fig.\ref{fig:geo} of~\cref{sec:evaluation}.
 


\subsubsection{Increasing the decision size}\label{ssec:superblock}

The requirement of deciding at most one block is too restrictive to scale with the number $n$ of consensus participants. Whatever $n$ is, the consensus decides at most one single block. As our goal is to scale with the number $n$ of consensus participants, we allow \solution to decide a combination of all the $\Omega(n)$ proposed blocks to make a superblock (line~\ref{line:superblock}). This helps ensuring the \emph{Thousand-TPS} property (\cref{sec:bgrd}) as we explained before.
Note that the same optimization was shown effective for Red Belly Blockchain~\cite{CNG18} to scale to hundreds of consensus participants, but Red Belly only supports the Bitcoin scripting language and not smart contracts.
The drawback of this superblock is that its size increases with 
the number $n$ of participants, and so does its propagation time. 
To cope with arbitrary delays, we build our consensus upon DBFT~\cite{CGLR18} that is partially synchronous~\cite{DLS88} and was recently proved correct via model checking~\cite{TG19,BGK21}.
%
More specifically, this consensus protocol remains safe whatever delay it takes to deliver a message and when messages are delivered in a bounded (but unknown) amount of time the consensus protocol terminates.

\begin{lstlisting}[language=parameterized,basicstyle=\LSTfont,escapechar = ?,escapeinside={(*}{*)},frame = single,firstnumber=20]
index := 0 // consensus instance
blockQueue := (*$\emptyset$*) // pending block to propose
commitChan := chan []byte  // commit channel

start_new_consensus(): (*\label{line:start-new-cons}*)
  index := index + 1 // increment round
  myBlock := blockQueue.peek() // get block proposal
  superblock := consensus_propose(myBlock) // block (*\label{line:propose}*)
  if (myBlock is in superblock) then    (*\label{line:remove-from-queue-start}*)
    blockQueue.poll() // dequeue proposal   (*\label{line:remove-from-queue-end}*)
  commitChan := superblock // send to gRPC srvce (*\label{line:superblock}*)
\end{lstlisting}

\subsubsection{The consensus protocol}
The protocol is divided in two procedures, {\ttt start\_new\_consensus} at lines~\ref{line:start-new-cons}--\ref{line:superblock} that spawns a new instance of (multivalue) consensus by incrementing the replicated state machine {\ttt index}, and {\ttt consensus\_propose} at lines~\ref{line:cons-start}--\ref{line:cons-end} that ensures that the consensus participants find an agreement on a superblock comprising all the proposed blocks that are acceptable. 
The idea of {\ttt consensus\_propose} builds upon classic reduction~\cite{BCG93,BKR94} by executing an all-to-all reliable broadcast~\cite{B87} to exchange $n$ proposals, guaranteeing that any block delivered to a correct process is delivered to all the correct processes: any delivered proposal is stored in an array $\ms{proposals}$ at the index corresponding to the identifier of the broadcaster.
The main difference is that these reductions use a probabilistic binary consensus algorithm while our binary consensus is deterministic.

A binary consensus at index $k$ is started 
with input value {\ttt true} for each index $k$ where a block proposal has been recorded (line~\ref{line:bbc}). 
To limit errors, \solution uses the formally verified deterministic binary consensus of DBFT~\cite{CGLR18}, we omit the pseudocode for the sake of space and
refer the reader to the formal verification of the protocol~\cite{TG19,BGK21}. 
%
%
%
As soon as some of these binary consensus instances return 1, the protocol spawns binary consensus instances with proposed value $\lit{false}$ for each of the non reliably delivered blocks at line~\ref{line:bbc-propose}. Note that this invocation is non-blocking.
As the reliable broadcast fills the {\ttt block} in parallel, it is likely that the blocks reliably broadcast by correct processes have been reliably delivered resulting in as many invocations of the binary consensus with value {\ttt true} instead. Once all the $n$ binary consensus instances have terminated, i.e., {\ttt decidedCount == n} at line~\ref{line:bbc-end}, the superblock is generated with all the reliably delivered blocks for which the corresponding binary consensus returned {\ttt true} (lines~\ref{line:supblock-start}--\ref{line:cons-end}).
At the end of {\ttt start\_new\_consensus}, if the superblock of the consensus contains the block proposed, then this block is removed from the {\ttt blockQueue} at lines~\ref{line:remove-from-queue-start} and~\ref{line:remove-from-queue-end} to avoid reproposing it later.

\begin{lstlisting}[language=parameterized,basicstyle=\LSTfont,escapechar=|,escapeinside={(*}{*)},frame = single,firstnumber=31]
blocks := (*$\emptyset$*) // blocks delivered by reliable bcast

upon reliable_broadcast.deliver(i, block): (*\label{line:rb-deliver}*)
    blocks[i] := block // append block to list
    decBlocks[i] := b_consensus.propose(i, true) (*\label{line:bbc}*)

consensus_propose(myBlock):  (*\label{line:cons-start}*)
  decCount := 0 // # decided bin. cons. instances
  decBlocks :=  (*$\emptyset$*) 
  reliable_broadcast.broadcast(myId, myBlock)  (*\label{line:rbcast}*)
  wait until (*$\exists$*)i : b_consensus.decide(i) == true (*\label{line:bbc-true}*)
     for j  from 0 to n do
       if blocks[j] is null then
         decBlocks[j] := b_consensus.propose(j, false) (*\label{line:bbc-propose}*)
     decCount := decCount+1
   wait until decCount == n (*\label{line:bbc-end}*) 
     superblock := (*$\emptyset$*) (*\label{line:supblock-start}*)
     for i from 0 to n do 
       if decBlocks[i] is true then (*\label{line:decided-blocks}*)
         superblock.add(blocks[i])
  return superblock  (*\label{line:cons-end}*)
 \end{lstlisting}

\subsubsection{Proof-of-stake and membership change}\label{ssec:open}
As mentioned in~\cref{sec:goal}, \solution is an open blockchain and changes its membership (SEVM and consensus nodes) at runtime with a reconfiguring smart contract (provided in Appendix~\ref{appendix}).
To cope with bribery attacks, \solution relies on the proof-of-stake (PoS) design common to other blockchains~\cite{GHM17,Eth2} that assumes that users who have stake are more likely to behave correctly.
Initially, the blockchain is setup with a membership smart contract that accepts a {\ttt rotate} method
that outputs a random sample of $n$ consensus participants among all potential blockchain participants with a preference for participants with the most assets, similar to a sortition~\cite{GHM17}.
Initially and periodically, the correct participants invoke the {\ttt rotate} function that 
outputs new consensus participants for the subsequent blocks.
Traditional SSL authentication 
guarantees that the byzantine participants are ignored. 
As in Eth2~\cite{Eth2}, neither does the system start nor does the {\ttt rotate} method is invoked until sufficiently many participants exist.
To incentivize participants, \solution can reward consensus participants just like Bitcoin's miners~\cite{Nak08}, however, this reward
has not been implemented.

\subsection{Proofs of correctness}\label{line:proof}

In this section, we show that \solution solves the  blockchain problem (Def.~\ref{def:blockchain}). 
Note that the proofs that \solution also guarantees \emph{Lightweigth-Security} and \emph{Thousands-TPS} (\cref{sec:goal}) follows directly from the protocols (\cref{ssec:interface} and \cref{ssec:tx-validation}) and the experimental results (\cref{sec:evaluation}).
For the sake of simplicity in the proofs, we assume that there are as many nodes playing the roles of consensus nodes and state nodes, and one state node and one consensus node are collocated on the same physical machine. 

\begin{lemma}
	\label{theorem1}
	If at least one correct node $\lit{consensus-propose}$s to a consensus instance $i$, then 
	every correct node decides on the same superblock at consensus instance $i$.
\end{lemma}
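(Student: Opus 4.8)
The plan is to reduce the statement to the established correctness properties of the two primitives that $\lit{consensus\_propose}$ composes --- the all-to-all reliable broadcast of~\cite{B87} and the formally verified deterministic binary consensus of DBFT~\cite{CGLR18,TG19,BGK21} --- essentially reproducing the classical binary-to-multivalue reduction~\cite{BCG93,BKR94} adapted to the superblock construction of lines~\ref{line:cons-start}--\ref{line:cons-end}. I will rely on the following facts. For reliable broadcast: if a correct node broadcasts a block it eventually delivers it; any block delivered to a correct node is eventually delivered to every correct node; and each correct node delivers at most one block per sender. For the DBFT binary consensus, under partial synchrony with $f<n/3$: it \emph{terminates} at every correct node once every correct node proposes to it; it satisfies \emph{agreement} (no two correct nodes decide differently); and it satisfies \emph{validity} (any decided value was proposed by some correct node).

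First I would fix a correct node $p_{j_0}$ that $\lit{consensus\_propose}$s at instance $i$ and let $j_0$ be the index at which it reliably broadcasts its block (line~\ref{line:rbcast}). By reliable-broadcast validity and totality, every correct node eventually delivers this block at index $j_0$ (line~\ref{line:rb-deliver}); hence every correct node participates in instance $i$ and invokes $\lit{b\_consensus.propose}(j_0,\lit{true})$ (line~\ref{line:bbc}). Since every correct node proposes to binary consensus instance $j_0$ with input $\lit{true}$, termination and validity force that instance to decide $\lit{true}$, and by agreement every correct node decides $\lit{true}$ for index $j_0$. Consequently the wait-until guard at line~\ref{line:bbc-true} is eventually satisfied at every correct node.

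Next I would show that every binary consensus instance $j\in\{0,\dots,n-1\}$ terminates with a common decision $d_j$ at all correct nodes. Once past line~\ref{line:bbc-true}, a correct node has, for each $j$, either already proposed $\lit{true}$ to instance $j$ (having delivered $\lit{blocks}[j]$) or proposes $\lit{false}$ to it at line~\ref{line:bbc-propose}; adopting the usual convention that only a node's first proposal to a binary consensus instance is effective, every correct node eventually proposes to every instance $j$, so by termination and agreement each instance $j$ decides a common $d_j$, and all $n$ instances have decided at every correct node (line~\ref{line:bbc-end}). It then remains to check the superblocks agree. For each $j$ with $d_j=\lit{true}$, validity of binary consensus $j$ yields a correct node that proposed $\lit{true}$ to it, hence one that delivered $\lit{blocks}[j]$; by totality every correct node eventually delivers this same block (and by integrity it is the unique block delivered for sender $j$), so every correct node adds the identical block at line~\ref{line:decided-blocks}. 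Since the index set $\{j:d_j=\lit{true}\}$ is common to all correct nodes by agreement, every correct node builds and returns the same $\lit{superblock}$ (line~\ref{line:cons-end}), which is the claim.

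I expect the main obstacle to be the interleaving between the asynchronous reliable-broadcast deliveries and the binary-consensus proposals. Concretely, I must argue (a) that every binary consensus instance receives an input from every correct node so that it terminates --- which first requires establishing that instance $j_0$ decides $\lit{true}$, since otherwise the wait at line~\ref{line:bbc-true} could block a correct node forever and stop it from proposing $\lit{false}$ to the remaining instances --- and (b) that whenever $d_j=\lit{true}$ every correct node possesses, or will obtain before finalizing its superblock, the block $\lit{blocks}[j]$, which may require an implicit wait beyond line~\ref{line:bbc-end}. Both points hinge on the totality and agreement guarantees of the reliable broadcast combined with the validity of the deterministic binary consensus, and on reconciling the ``first proposal wins'' convention with the two distinct proposal sites at lines~\ref{line:bbc} and~\ref{line:bbc-propose}.
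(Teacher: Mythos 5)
Your proof is correct and follows essentially the same route as the paper's own argument: it reduces the claim to reliable-broadcast validity/totality together with the agreement, termination and validity of the DBFT binary consensus, concluding that all correct nodes obtain the same \texttt{decBlocks} bit vector and the same delivered blocks and hence return the same superblock. If anything, your write-up is more careful than the paper's sketch, since you explicitly handle termination of all $n$ binary instances (via the decision of instance $j_0$ unblocking the wait) and the need for every correct node to have delivered \texttt{blocks}$[j]$ whenever instance $j$ decides true before the superblock is assembled.
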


\begin{proof}
    If a correct node $p$ $\lit{consensus-propose}$s, say $v$, to a consensus instance $i$, then 
    $p$ reliably broadcast $v$ at line~\ref{line:rbcast}.
    By the reliable broadcast properties~\cite{B87}, we know that 
$v$ is delivered at line~\ref{line:rb-deliver} at
all correct nodes.
By assumption, there are at least $2f+1$ correct proposers invoking the reliable broadcast, hence
all correct proposers eventually populate their $\lit{block}$ array with at least one common value. All correct proposers
will thus have input $\lit{true}$ for the corresponding binary consensus instance at line~\ref{line:bbc}.
Now it could be the case that other values are reliably-broadcast by byzantine nodes, 
however, reliable broadcast guarantees that if a correct proposer
delivers a valid value $v$, then all correct proposers deliver $v$. 
By the validity and termination 
properties of the DBFT binary consensus~\cite{CGLR18}, the decided value for the
binary consensus instance  at line~\ref{line:bbc-true} is the same at
all correct nodes. It follows that all correct nodes have the same bit array of $\lit{decBlocks}$ values at line~\ref{line:decided-blocks}
and that they all return the same superblock at line~\ref{line:cons-end} for consensus instance $i$.
%
%
\end{proof}

The next three theorems show that \solution satisfies each of the three properties of the blockchain problem (Definition~\ref{def:blockchain}).
\begin{theorem}
	\solution satisfies the safety property.
\end{theorem}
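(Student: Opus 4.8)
The plan is to reduce the safety property of Definition~\ref{def:blockchain} to the agreement guarantee already established in Lemma~\ref{theorem1}. First I would observe that every correct blockchain node builds its local chain by appending the committed superblocks in strictly increasing order of the consensus-instance counter {\ttt index}: that counter is a per-node replicated-state-machine variable that is only ever incremented in {\ttt start\_new\_consensus} (line~\ref{line:start-new-cons}), each call produces exactly one superblock on {\ttt commitChan} (line~\ref{line:superblock}), and the SEVM consumes {\ttt commitChan} in FIFO order, appending the blocks of the $i$-th superblock to persistent storage before moving to the $(i{+}1)$-th (the {\ttt execute\_transaction} loop, lines~\ref{line:exec-tx-start}--\ref{line:exec-tx-end}). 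Together with the one-instance-at-a-time discipline of the consensus system, this makes ``instance $i$'' globally well-defined and shows that the local chain of a correct node is exactly a prefix $S_1, S_2, \dots, S_\ell$ of the sequence of per-instance superblocks. Hence, to prove safety it suffices to show that any two correct nodes $p$ and $q$ commit the identical chain entry $S_i$ at every instance $i$ not exceeding the length of both chains; the prefix conclusion of the safety property then follows immediately.

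I would prove this by induction on $i$. For the inductive step, the all-to-all reliable broadcast at line~\ref{line:rbcast} and the structure of {\ttt consensus\_propose} guarantee that whenever a correct node commits at instance $i$ the hypothesis of Lemma~\ref{theorem1} is met, so all correct nodes obtain the same {\ttt decBlocks} bit array and hence the same multiset of proposed blocks in $S_i$ (line~\ref{line:cons-end}). It then remains to upgrade ``same multiset of proposed blocks'' to ``byte-identical stored chain entry'': because the transactions of each block are reordered by a fixed, node-independent rule (\cref{ssec:storing}), the lazy validation at line~\ref{line:validate} is a deterministic filter, and {\ttt updateBlockState}/{\ttt executeTx}/{\ttt persist} are deterministic functions of the ordered transaction list and of the parent state, a secondary induction over $S_1,\dots,S_{i-1}$ (identical at $p$ and $q$ by the outer hypothesis) shows that the resulting world-state root, block headers, block hashes and persisted records computed at $p$ coincide with those at $q$. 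Combining the two inductions yields $S_i$ identical at $p$ and $q$ for every common index, which is the safety property.

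The step I expect to be the main obstacle is precisely this determinism argument bridging Lemma~\ref{theorem1}'s notion of ``same superblock'' (same agreed proposed blocks) and the definition's notion of ``same chain'' (same \emph{executed} blocks, with hashes and state roots): it requires making explicit that the reordering permutation of \cref{ssec:storing} depends only on the block contents, that SEVM execution contains no residual non-determinism (the only timestamp used is the one carried inside the agreed block, there is no unseeded randomness, gas accounting is deterministic), and that {\ttt persist} is a deterministic function of the execution trace. A secondary, minor point is ruling out the degenerate cases where a correct node commits at instance $i$ with no correct proposer having reached $i$, or commits an empty/all-{\ttt false} superblock; both are handled by the same argument, since the agreement property of the DBFT binary consensus already forces the full {\ttt decBlocks} array to be identical at all correct nodes that decide at instance $i$ (this is exactly the part of the proof of Lemma~\ref{theorem1} that does not rely on a correct broadcaster). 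The remaining ingredients --- monotonicity of {\ttt index}, FIFO consumption of {\ttt commitChan}, and the passage from pairwise-identical prefixes to the prefix relation --- are routine.
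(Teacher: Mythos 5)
Your proposal is correct and follows essentially the same route as the paper's own proof: the paper likewise reduces safety to Lemma~\ref{theorem1} (identical superblock per consensus instance, hence identical block $B_{\ell}$ at each index $\ell$) and then invokes network asynchrony to conclude that a lagging correct node holds a prefix of the other's chain. The extra machinery you add --- the explicit induction over instances and the determinism argument upgrading ``same agreed superblock'' to ``identical executed and persisted chain entry'' --- merely spells out steps the paper leaves implicit, so it is a more detailed rendering of the same argument rather than a different one.
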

\begin{proof}
	The proof follows from the fact that any block $B_{\ell}$ at index $\ell$ of the chain is identical for all correct blockchain nodes due to Lemma~\ref{theorem1}.
	Due to network asynchrony, it could be that a correct node $p_1$ is aware of block $B_{\ell+1}$  at index $\ell+1$, whereas another correct node $p_2$ has not created this block $B_{\ell+1}$ yet.
	At this time, $p_2$ maintains a chain of blocks that is a prefix of the chain maintained by $p_1$.
	And more generally, the two chains of blocks maintained locally by two correct blockchain nodes are either identical or one is a prefix of the other. 
\end{proof}

\begin{theorem}
	\solution satisfies the validity property.
\end{theorem}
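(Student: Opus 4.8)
The plan is to trace a single, arbitrary block from the moment it leaves the consensus to the moment a correct node persists it, and to argue that only valid transactions (in the sense of Definition~\ref{def:blockchain}) survive that path. Fix a correct blockchain node $p$ and a block $B$ that $p$ appends to its local chain at some index $\ell$. By construction of \solution, $B$ is one of the constituent blocks of a superblock $S$ that $p$ received through its {\ttt commitChan} as the output of some consensus instance $i$ (line~\ref{line:superblock}) and then processed inside {\ttt execute\_transaction} (lines~\ref{line:exec-tx-start}--\ref{line:exec-tx-end}). By Lemma~\ref{theorem1}, every correct node decides this same superblock $S$ at instance $i$; consequently it suffices to reason about the local processing performed by $p$, and the consensus itself need not be argued to preserve validity (it merely transports byte arrays).

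Next I would read off from the pseudocode which transactions of $B$ actually get executed and persisted. When $p$ iterates over $S$ (line~\ref{line:block-loop-start}), for every transaction of every block it performs the lazy validation {\ttt isValid} at line~\ref{line:validate}, and only transactions that pass are inserted in {\ttt vtxs}; hence only they are executed by {\ttt executeTx} and written to stable storage by {\ttt persist} at line~\ref{line:persist}. So the block $p$ appends at index $\ell$ is exactly the subset of the proposed block's transactions that passed lazy validation. It then remains to check the three requirements of validity for such transactions: \emph{well-formedness} follows from the structural checks of the validation (size bound, block gas limit, gas sufficiency); \emph{correct signature by the issuer} follows from the signature check performed by the validation; and \emph{non-conflicting} follows from the combination of the nonce check in the lazy validation, the transaction reordering in {\ttt updateBlockState} (line~\ref{line:update-block-state}), which places transactions of a common issuer in increasing-nonce order so a legitimate sequence is not spuriously rejected, and the execution-time checks of {\ttt executeTx}, which revert the state whenever a transaction would conflict with the state produced by the transactions executed before it. Since $p$ and $B$ are arbitrary, validity holds.

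I expect the \emph{non-conflicting} clause to be the main obstacle, because a byzantine SEVM node may propose a block that skipped eager validation, and a superblock aggregates up to $n$ such blocks, so $S$ may a priori contain transactions that conflict across block boundaries (e.g., two withdrawals that each pass the nonce check in isolation but together overdraw an account). The argument I would make is that {\ttt updateBlockState} fixes a single order over the transactions retained from $S$ and that each transaction is re-validated — by {\ttt isValid} and then during {\ttt executeTx} — against the state obtained by applying the transactions preceding it in that order; a conflicting transaction either fails {\ttt isValid} and is dropped from {\ttt vtxs} or has its effects reverted at execution, so the net state persisted by $p$ is consistent and the persisted block is a set of mutually non-conflicting transactions. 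A secondary point worth spelling out explicitly is that a well-formed, correctly signed transaction whose smart-contract execution merely reverts (as opposed to being malformed or misnonced) is still ``valid'' under Definition~\ref{def:blockchain}, matching EVM semantics where such transactions are legitimately included in blocks; this reconciles the validity notion with the fact that {\ttt persist} may record a transaction that reverted during execution.
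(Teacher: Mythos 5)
Your proposal is correct and takes essentially the same route as the paper, whose entire proof is the observation that, by examination of the code at the lazy-validation step (line~\ref{line:validate}), only valid transactions are executed and persisted to disk at every correct node, so every appended block is valid. Your extra discussion of cross-block conflicts, reordering, and reverted executions elaborates points the paper leaves implicit (they are handled informally in \cref{ssec:tx-validation}) but does not change the underlying argument.
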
	
\begin{proof}
By examination of the code at line~\ref{line:validate}, only valid transactions are executed and persisted to disk at every correct node.
It follows that for all indices $\ell$, the block $B_{\ell}$ is valid.
\end{proof}

\begin{theorem}\label{thm:liveness}
	\solution satisfies the liveness property.
\end{theorem}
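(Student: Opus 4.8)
The plan is to trace a single valid transaction $tx$ (well-formed, correctly signed and non-conflicting, as in the validity clause of Definition~\ref{def:blockchain}) through the three stages of the lifecycle of \cref{sec:lifecycle}---Reception, Consensus, Commit---showing it cannot be stuck forever at any stage, and then to invoke Lemma~\ref{theorem1} to turn ``stored at one correct node'' into ``stored at every correct node'' (a malformed $tx$ is discarded by the eager validation and there is nothing to store). In the \emph{Reception} stage, let $p$ be a correct blockchain node that receives $tx$; its SEVM runs the eager validation, which succeeds since $tx$ is well-formed, so $tx$ enters $p$'s transaction pool; as transactions keep arriving (or via the batching time-out), the pool reaches its threshold and $tx$ is packed into a proposed block $B$ that $p$'s consensus client submits, so that $B$ is eventually proposed at line~\ref{line:propose} after being appended to $p$'s $\lit{blockQueue}$; since this queue is now non-empty, $p$ keeps spawning consensus instances at line~\ref{line:start-new-cons}.

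For the \emph{Consensus} stage I would show that $\lit{blockQueue}$ drains one entry per instance, so $B$ reaches the head after finitely many instances. The key sub-claim is: whenever a correct node proposes the head $B'$ of its queue to a consensus instance $i$, the superblock decided at instance $i$ contains $B'$. Indeed, the proposer reliably broadcasts $B'$ at line~\ref{line:rbcast}, so by the totality of reliable broadcast~\cite{B87} every correct node delivers $B'$ and hence every one of the $\ge 2f+1$ correct nodes feeds $\lit{true}$ to the binary consensus instance indexed by the proposer (line~\ref{line:bbc}); by the validity of the DBFT binary consensus~\cite{CGLR18}, all correct nodes proposing $\lit{true}$ forces the decision $\lit{true}$, so $B'$ ends up in the superblock returned at line~\ref{line:cons-end} and is removed from the queue at lines~\ref{line:remove-from-queue-start}--\ref{line:remove-from-queue-end}. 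Moreover each consensus instance terminates: the sub-claim applied to the proposer's own index yields at least one binary consensus that decides $\lit{true}$, which unblocks the wait at line~\ref{line:bbc-true} and triggers $\lit{false}$-proposals for the remaining indices (line~\ref{line:bbc-propose}), so, using the termination of DBFT in the partial-synchrony regime~\cite{DLS88,TG19,BGK21} together with reliable-broadcast totality, all $n$ binary consensus instances decide and the superblock is produced. Consequently, after finitely many instances $B$ is the head, $p$ proposes it, and $B$ belongs to the superblock of some instance $i$; by Lemma~\ref{theorem1}, every correct node decides this same superblock at instance $i$.

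In the \emph{Commit} stage, at every correct node the superblock of instance $i$ arrives through $\lit{commitChan}$ (line~\ref{line:commit-chan}) and the $\lit{execute\_transaction}$ loop iterates over its blocks: $tx$ passes the lazy validation at line~\ref{line:validate} (it is well-formed, correctly signed and non-conflicting), is run by $\lit{executeTx}$, and is written to the LevelDB store by $\lit{persist}$ at line~\ref{line:persist}; processing the superblock one block at a time (\cref{ssec:storing}) ensures the SEVM does not drop $tx$ under load. Hence $tx$ is reliably stored in the block sequence of every correct blockchain node, i.e., the liveness property holds.

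I expect the \emph{Consensus} stage to be the crux. The delicate points are (i) queue progress---that every instance actually commits its head block---which rests on the validity of the DBFT binary consensus and on reliable-broadcast totality, and (ii) that each instance terminates, which holds only after message delays become bounded and is the deepest ingredient, imported from the formal verification of DBFT~\cite{TG19,BGK21}. More pedestrian gaps are arguing that the Reception-stage batching threshold is eventually reached, and that a byzantine-issued transaction conflicting with $tx$ cannot displace it at lazy-validation time---both handled by restricting attention to valid, non-conflicting transactions as the other correctness theorems do.
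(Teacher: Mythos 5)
Your overall route---tracing $tx$ through Reception, Consensus and Commit and then using Lemma~\ref{theorem1} to turn local commitment into commitment at every correct node---is the same as the paper's, which compresses it into a few sentences (the transaction is eventually proposed at line~\ref{line:propose}, the consensus terminates, and new instances keep being spawned while pending transactions remain). The extra detail is welcome, but your pivotal Consensus-stage sub-claim is stronger than what the protocol guarantees and, as stated, fails: you claim that whenever a correct node proposes the head $B'$ of its queue to instance $i$, the superblock decided at instance $i$ contains $B'$, because totality of the reliable broadcast makes every correct node input $\lit{true}$ for the proposer's index. Totality is only \emph{eventual}: as soon as any single binary consensus decides $\lit{true}$, every correct node proposes $\lit{false}$ for all indices whose blocks it has not yet delivered (line~\ref{line:bbc-propose}). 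Under partial synchrony a slow reliable broadcast from a correct proposer can lose this race, so the binary consensus for that index receives mixed inputs from correct nodes and may legitimately decide $\lit{false}$---DBFT's validity pins the outcome only when all correct nodes propose the same bit. The protocol itself anticipates exactly this case: the test at lines~\ref{line:remove-from-queue-start}--\ref{line:remove-from-queue-end} dequeues $\lit{myBlock}$ only if it made it into the superblock, a check that would be pointless if your sub-claim held.

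Consequently, the argument ``the queue drains one entry per instance, so $B$ is committed in the instance where it reaches the head'' does not go through. The correct shape---and the paper's---is eventual rather than per-instance: each instance terminates (your argument for that is fine, since before any $\lit{false}$ proposals are issued the correct nodes' broadcasts eventually make some binary consensus decide $\lit{true}$, unblocking line~\ref{line:bbc-true}); a block that misses the superblock simply stays in $\lit{blockQueue}$ and is re-proposed, and \solution keeps spawning instances while pending blocks exist; one then argues that the block is included in some later instance (e.g., once delays are bounded after GST the proposer's broadcast is delivered everywhere before any binary consensus can decide $\lit{true}$). Your Reception and Commit stages and the appeal to Lemma~\ref{theorem1} are consistent with the paper; only this inclusion step needs to be weakened from ``in the very instance where $B$ is first proposed'' to ``in some eventual instance'' and justified accordingly.
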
	
\begin{proof}
	As long as a correct replica receives a transaction, we know that the transaction is eventually proposed by line~\ref{line:propose}. 
	The proof follows from the termination of the consensus algorithm~\cite{BGK21} and the fact that \solution keeps spawning new consensus instances as long as correct replicas have pending transactions.
\end{proof}

\section{Evaluation of \solution}\label{sec:evaluation}\label{ssec:setup}
In this section, we present the experimental evaluation of \solution, compare it against other blockchains (\cref{sec:comparison}), evaluate it when running across different continents (\cref{sec:geodistributed}) and with a Twitter DApp (\cref{sec:dapp}).

\subsection{Experimental setup}
We use up to 200 AWS virtual machines from 10 regions located in separate countries across 5 continents: Ohio, Mumbai, Seoul, Singapore, Sydney, Tokyo, Canada, Frankfurt, London, Paris, Stockholm, S\~{a}o Paulo. All machines run Ubuntu v18.04.3 LTS, golang v1.13.1. 
When not specified otherwise, the experiments consist of having clients sending 1500 distinct
transactions to each SEVM nodes that exchange with their respective consensus node to spawn a consensus instance. 
The client machines are of type c5.xlarge with 4 vCPUs and 8\,GiB of memory, the SEVM nodes are of type c5.2xlarge 
with 8 vCPUs, 16\,GiB of memory, 
and the consensus nodes are of type c5.4xlarge with 16 vCPUs, 32\,GiB of memory.

As \solution is compatible with Ethereum, we reuse 
libraries of the JavaScript runtime environment {\ttt Node.js}:
we create wallet addresses with {\ttt ethereumjs-wallet}, pre-sign 
transactions to transfer assets, upload or invoke a smart contract with {\ttt ethereumjs-tx}, and serialize these transactions 
before saving them to a JSON file.
The client iterates through the serialized JSON file and sends the transactions to the SEVM 
using {\ttt web3.eth.sendSignedTransaction} through the
{\ttt web3.js} javascript API 
using {\ttt http}. 
Hence, this offloads the encryption time from the performance measurement.
All presented data points are averaged over at least 3 runs.

\subsection{Comparison with other blockchains}\label{sec:comparison}
Here we compare the performance of \solution to Quorum~\cite{jpmorganchase_quorum} from JP Morgan/Consensys and Concord~\cite{concord} from VMware that both support Ethereum smart contracts.
While evaluating all blockchains is out of the scope of this paper, note that Table~\ref{table:comparison} provides a comparison of \solution to non-sharded blockchains while sharded blockchains are discussed in \cref{sec:sharding} and \cref{sec:rw}.
We have also explored Burrow~\cite{burrow} that is unstable~\cite{SNG20} and Ethermint whose open issues ~\cite{EthermintIssue} prevented us from evaluating it.

\remove{

\paragraph{Comparison with other blockchains}
As baselines, we evaluated two blockchains \deepal{shall we add the concord graph or not} that both have support for the Ethereum smart contracts and builds upon a BFT consensus algorithm.
\begin{smallitem}
\item {\bf Quorum}~\cite{jpmorganchase_quorum} is an open source blockchain developed by JP Morgan and acquired by ConsenSys that  supports Ethereum smart contracts. It builds upon {\ttt geth} and IBFT that has recently been corrected to ensure termination~\cite{Sal19}. Its code is written in golang and is publicly available at {\ttt \url{https://github.com/ConsenSys/quorum}}. We evaluate quorum by setting it up as per~\cite{quorumsetup}.
\item {\bf Concord}~\cite{concord} is described as a scalable decentralized blockchain written in C++ that builds upon a C++ EVM implementation and the SBFT consensus algorithm~\cite{GAG19} that outperforms PBFT by reducing its message complexity using threshold signatures. The code of Concord is available at {\ttt \url{https://github.com/vmware/concord-bft/}}. While we could not find any performance result of its Concord blockchain, SBFT is known to achieve 172 transactions per second in a world-wide setting~\cite{GAG19}. 
\item {\bf Alternative blockchains} 
exist~\cite{burrow,Ethermint,GHM17}, 
however, we did not evaluate them due to a number of reasons. Burrow~\cite{burrow} has recently been reported as unstable~\cite{SNG20} while Ethermint~\cite{Ethermint} is still under development and we could not benchmark it due to some open issues\footnote{https://github.com/tharsis/ethermint/issues/808}\footnote{https://github.com/tharsis/ethermint/issues/799} on the ethereum state machine of Ethermint. 
Finally, other blockchains claim to improve the performance of Ethereum but no publications describe their internals, like NEO, Tron and EOS.


\end{smallitem}

\paragraph{Benchmarking with web3.js}
We propose the following way to evaluate Ethereum blockchains using the javascript API, {\ttt web3.js}~\cite{DWC17,caliper,SNG20}:
\begin{smallitem}
\item {\bf Raw transaction method:} This method sends transactions to the SEVM through {\ttt http}.
This is the method adopted in our experiments because we believe it to be more realistic than the others.
We create wallets using {\ttt ethereumjs-wallet} and pre-signed transactions using {\ttt ethereumjs-tx} to offload the encryption time from the performance measurement. We serialize these transactions and save them to a JSON file. We iterate through 
the serialized JSON file and send the transactions to the SEVM  using {\ttt web3.eth.sendSignedTransaction}. 
A generic implementation of the way in which we send transactions for our benchmarks looks as follows:
 \begin{lstlisting}[language=parameterized,basicstyle=\LSTfont,escapechar = ?,escapeinside={(*}{*)},frame = single,firstnumber=52]
  const w3 = require("web3")
  const fs = require("fs")
	
  const web3 = new w3(new w3.providers
    .HttpProvider("http://l0.16.18.254:8545"))
	
  // load the transaction file
  txs = JSON.parse(fs.readFileSync('txs.json'))
	
  // send a transaction one by one
  for ( let i = 0; i < txs.length; i++ ) {
    web3.eth.sendSignedTransaction("0x" + 
      txs[i]["serialized"])
      .on('receipt', console.log)
      .catch(console.log)
  }
\end{lstlisting}
By adjusting the loop iteration count, one can adjust the batch size. The {\ttt http} provider defines the IP address of the {\ttt geth} node. The port specifies the port that the SEVM node is listening to.
\remove{
\item {\bf Batch method:} This method opens a remote console on the {\ttt geth} node that executes the javascript file received as parallel batches locally via {\ttt IPC} \deepal{No need to define the batch method if we do not use concord}.
We only used this method to compare \solution to Concord because Concord's initilization of the genesis block
could not provision accounts, limiting us to the number of accounts present in Concord sample {\ttt genesis.json}  file. As we did not know the private keys associated to these accounts, we could not sign transactions and use {\ttt web3.sendSignedTransaction}. Therefore, we resorted to using {\ttt web3.sendTransaction} (i.e., without pre-signing) to send transactions that transfer ether between already existing accounts.
A generic implementation of the way in which we send transactions for our benchmarks of Concord is as follows:
 \begin{lstlisting}[language=parameterized,basicstyle=\LSTfont,escapechar = ?,escapeinside={(*}{*)},frame = single]
function bench(size) {
  personal.unlockAccount(eth.accounts[0], 
    "WelcomeToSirius");
  var batch = web3.createBatch();
		
  for (i = 0; i < size; i++) {
   batch.add(web3.eth.sendTransaction({
    from: eth.accounts[0],
    to: "0xBce16ea55bB357B038e...A88879c665a31",
    value: 1
   }))
  }
  batch.execute();
}
bench(300);
\end{lstlisting}
In particular, we had to use transaction batches as the accounts were limited, since sending multiple separate transactions from the same account at a high sending rate causes nonce disordering and as a result, transaction drops. 
We write a javascript code to create a batch of transactions and then use {\ttt web3}'s {\ttt IPC} method to attach it to the {\ttt geth} console of the (S)EVM, which allows the javascript to be executed at the remote (S)EVM node. The attachment with {\ttt IPC} is done as follows:
 \begin{lstlisting}[language=parameterized,basicstyle=\LSTfont,escapechar = ?,escapeinside={(*}{*)},frame = single]
  geth --exec 'loadScript("batch.js")' attach 
    http://13.211.24.216:8545
\end{lstlisting}
}
\end{smallitem}

\deepal{should we use Consensus and SEVM without saying SEVM node and consensus node all the time?}\vincent{We should write "node"}
\paragraph{Client setup}
In our experiments, we send a fixed number of transactions from each client machine to keep the sending rate constant. 
Each client instance sends transactions to a specific SEVM node. The sending of transactions to each SEVM node is done concurrently so as to stress-test the blockchain. 
The remote SEVM node uses a threshold to aggregate the transactions (e.g., 1500 transactions per batch) 
before proposing this batch to the consensus layer (remote consensus node).

} 

Figure~\ref{fig:quorum} reports the latency and throughput of both \solution and Quorum.
\solution outperforms Quorum both in terms of latency and throughput because Quorum does not use superblocks.
Interestingly, we also observe that the performance of Quorum starts decreasing as the sending rate increases
whereas the performance of \solution keeps increasing, this is seemingly due to the growing backlog of requests in Quorum 
that induces congestion. Unfortunately, we could not test a sending rate of 800\,TPS and above as Quorum would start losing requests, which confirms previous 
observations~\cite{SNG20}.


\begin{figure}[t]
	\centering
	\includegraphics[scale=0.28,clip=true,viewport=55 0 960 290]{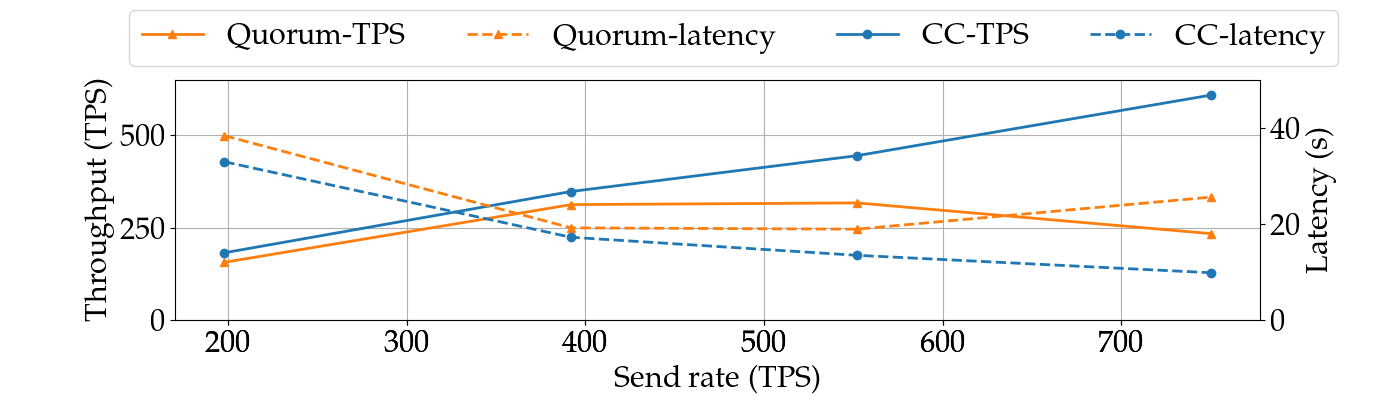}
	\caption{Comparison of throughput and latency between \solution (CC) and Quorum against sending rate. 
	}
	\label{fig:quorum}
\end{figure}

\begin{figure}[t]
	\centering
	\includegraphics[scale=0.28,clip=true,viewport=55 0 960 290]{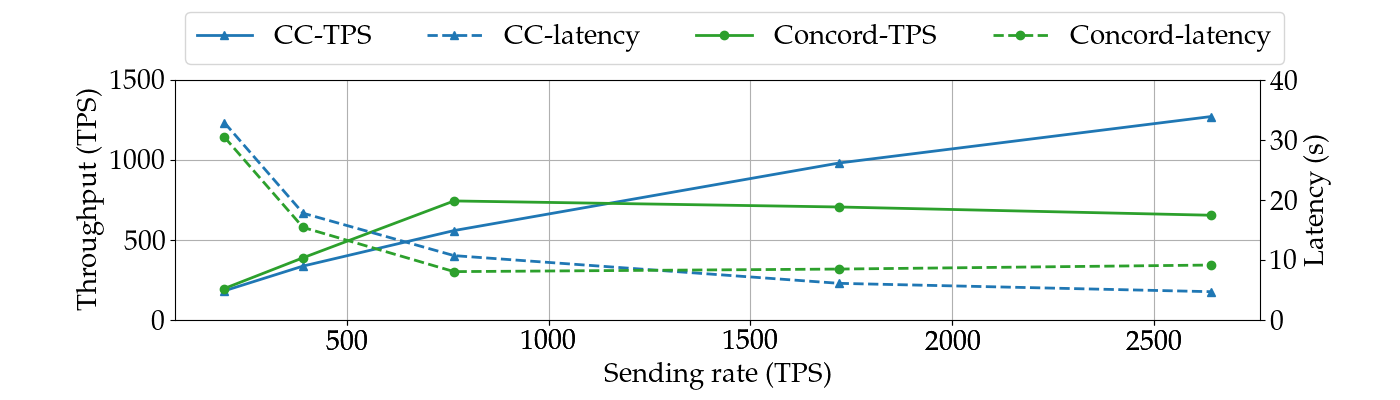}
	\caption{Comparison of throughput and latency between \solution (CC) and Concord against sending rate.}
	\label{fig:concord}
\end{figure}

%
%
Figure~\ref{fig:concord} compares the throughput and latency of Concord and \solution. 
As Concord suffers from known configuration issues~\cite{bugconcord20} that prevented us from running it on a distributed system, we ran
both Concord and \solution on a single c5.9xlarge machine with 4 client machines. 
Concord slightly outperforms \solution with low sending rates, however, as the sending rate increases, \solution outperforms Concord significantly.

\subsection{Effect of validation reduction using the SmartBank DApp}\label{sec:dapps}
In order to assess the impact of the validation optimization (\cref{sec:sevm}) of the SEVM on the performance, 
we measured the time spent validating eagerly when running the smart bank DApp that is part of BlockBench~\cite{DWC17}. To this end, 
we instrumented its {\ttt writeCheck} function to measure both the total time $\Delta^{n}_{SEVM}$ spent treating $k$ calls and the average time 
$\delta^{n}_{SEVM}$ spent by each server of SEVM validating eagerly these calls on $n$ nodes, 
to deduce the rest of the treatment time not affected by the validation optimization $\beta = \Delta^{n}_{SEVM} - \delta^{n}_{SEVM}$.

Based on this measurement, we could deduce the time $\delta_{EVM}$ the EVM would spend 
validating eagerly without the validation optimization:
$\delta_{EVM} = n\cdot \delta^{n}_{SEVM}.$
In particular, regardless of $n$, we know that the EVM would spend $\Delta_{EVM} = \beta + \delta_{EVM}$ to treat the function calls.
By contrast, depending on $n$, the SEVM would spend $\Delta^{n}_{SEVM} = \beta + \delta^{n}_{SEVM}$.
As $\delta^{n}_{SEVM} =  \delta_{EVM}/n$, we know that $\lim_{n\rightarrow \infty}(\delta^{n}_{SEVM}) = 0$.
This means that, with $n$ servers, 
the EVM slowdown compared to the SEVM is:
$$S = \frac{\Delta_{EVM} - \Delta^{n}_{SEVM}}{\Delta^{n}_{SEVM}} = \frac{\delta_{EVM}+\delta^{n}_{SEVM}}{\beta+ \delta^{n}_{SEVM}}.$$
As $n$ tends to infinity, we thus have a slowdown of: $$\lim_{n\rightarrow +\infty}S = \frac{\delta_{EVM}}{\beta}.$$

Our measurement obtained with $k=6000$ transactions and $n=4$ revealed that 
$\delta^{n}_{SEVM} = 0.61$ seconds and $\Delta^{n}_{SEVM} = 5.66$ seconds.
Hence, we have $\beta =  5.66-0.61 = 5.05$.
As $n=4$, we have $\delta_{EVM} = 4 \times 0.61 = 2.44$ so that $\Delta_{EVM} = 5.05+2.44 = 7.49$.
This means that the EVM would take $S = 32\%$ more time than the SEVM to treat these DApp requests.
Finally, as $n$ tends to infinity, the slowdown of the EVM over the SEVM would become $48\%$.

\subsection{Storing the superblock efficiently}\label{ssec:store}
To measure the impact on performance of storing each block separately,
we implemented a naive method that stores the whole block at once.
More precisely, we changed the storing loop (\cref{ssec:storing}) 
by simply removing the inner {\ttt for} loop at 
lines~\ref{line:block-loop-start}--\ref{line:block-loop-end} that persisted one (sub-)block at a time, in order 
to persist the superblock once and for all. 
In this experiment, we setup a network of 2 clients machines, 10 consensus machines and 10 SEVM machines where clients send 1500 distinct transactions to each EVM node for a total of 15000 transactions.
\remove{
This results in the loop depicted below:

\begin{lstlisting}[language=parameterized,basicstyle=\LSTfont,escapechar = ?,escapeinside={(*}{*)},frame = single]
execute_transaction:
  // for each superblock received 
  for block := range node.commitChan do
    total := 0
    // deserialize the block
    txs := node.txm.deserialize(block)
    // create the corresponding block structure
    updateBlockState(&tmtAbciTypes.Header{
         Time: time.Now(), NumTxs: int64(len(txs))})
    // execute each transaction in the block
    for _, tx := range txs do // for each txn
      executeTx(tx)
    done
    // persist block by writing to disk
    persist(common.Address{})
    total += len(txs)
  done
\end{lstlisting}
}
\begin{figure}[ht]
	\centering
	\includegraphics[scale=0.27,clip=true,viewport=20 0 930 355]{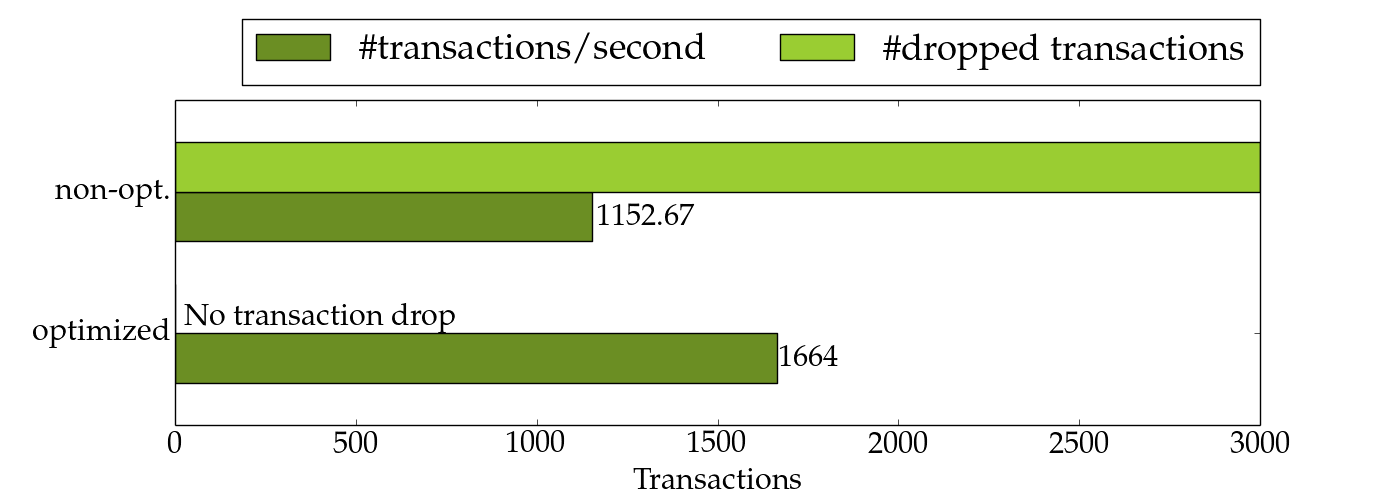}
	\caption{Performance difference when processing each block of a superblock at a time (optimized) and when processing the entire superblock at once (non-optimized)}
	\label{fig:superblock}
\end{figure}

Figure~\ref{fig:superblock} compares the performance obtained with \solution (superblock optimized) and with the naive approach (superblock non-optimized). 
The throughput of \solution (superblock optimized) is $44\%$ higher than the throughput of the naive approach.
This is because trying to persist a large superblock that comprises $10$ blocks 
leads to I/O congestion. 
One might argue that multi-threading block writes and transaction executions could also solve this issue. However, this is not possible as the execution and writing of blocks should happen sequentially. Interestingly, 
In addition, we observed that 3000 transactions get dropped, which represents 20\% of all transactions, when executing the naive approach.
This is due to CPU overload: executing a superblock of 10 blocks within a single loop iteration is more CPU intensive than executing one block per iteration 
because between two block executions the CPU resource can be allocated to other tasks.
If the clients keep sending transactions while the CPU usage of the SEVM node reaches 100\%, the SEVM starts dropping incoming transactions as soon as it cannot hold any more transactions.
These results show the importance of optimizing the superblock storage for \solution to not suffer transaction drops.

\subsection{World-wide scalability}\label{sec:geodistributed}

To evaluate the scalability of the performance of \solution, we deployed \solution in 10 regions spanning 5 continents: Canada, London, Mumbai, Oregon, Paris, S\~{a}o Paulo, Singapore, Stockholm, Sydney and Tokyo. 
As previously mentioned, we consider for simplicity that each participant is running both a consensus node and an SEVM node so that we can consider each participant as a single entity, out of all of which at most a third can be faulty.


\remove{
In order to deploy in a world-wide environment, we decoupled the EVM nodes from the consensus nodes 
and 
ran fewer consensus nodes than EVM nodes. 
The idea of this differentiation is to limit the number of nodes that decide upon the same block, while offering the possibility to every participant to decide upon some block. 
Although not evaluated here, note that this differentiation could be automated at runtime either by the membership selection (\cref{sec:dbft})
in order to rotate the roles among the available participants: 
the correct consensus nodes simply have to redirect the requests to the EVM nodes and each EVM node must collect the same decided block from $(f+1)$ distinct consensus nodes before appending this block to the chain.

In particular, we deploy each consensus node in each of the 10 aforementioned countries (\cref{ssec:setup})
and scatter the remaining 60 EVM nodes equally in these countries. 
We also deploy 12 clients, one per region, each sending at a rate of 4500\,TPS.
Although it may seem insufficient to run 12 consensus nodes because a coalition of 4 nodes would be sufficient to reach a disagreement, note that these 12 nodes are all in distinct countries and distinct jurisdictions.  
That is, the risk of coalition is probably lower than (i)~in EOS where the 21 block producers are more likely from the same 4 countries, Cayman Islands, China, Hong Kong and Singapore~\cite{EOS}, (ii)~in NEO that runs on top of 7 nodes and (iii)~in Ethereum typically controlled by between 2 and 4 mining pools~\cite{GBE18,EGJ18}. }

Figure~\ref{fig:geo} depicts the throughput without end-to-end ecryption (w/o TLS) and with encryption (with TLS) of \solution as we run \solution on more and more machines: We start our experiment with 20 machines spread evenly in the 10 countries and add machines by group of 20 evenly spread in the 10 countries
until we reach 200 machines.
We observe that the throughput increases as we increase the number of nodes from 1100\,TPS at 20 machines to 2038\,TPS at 200 machines, demonstrating the scalability of \solution even in a geo-distributed setting.
The curve flattens out at large scale between 140 and 200 nodes, indicating that the gain obtained in throughput by adding more machine becomes lower and lower. This is due to the numerous machines consuming the available bandwidth.
Finally, we observe, as expected, that the TLS encryption comes at a cost, however, this overhead is negligible in comparison of the overall performance as the peak throughput with TLS (1960\,TPS) is only 4\% lower than the peak throughput without TLS (2038\,TPS).


\begin{figure}[th]
	\centering
	\includegraphics[scale=0.32,clip=true,viewport=25 0 1000 410]{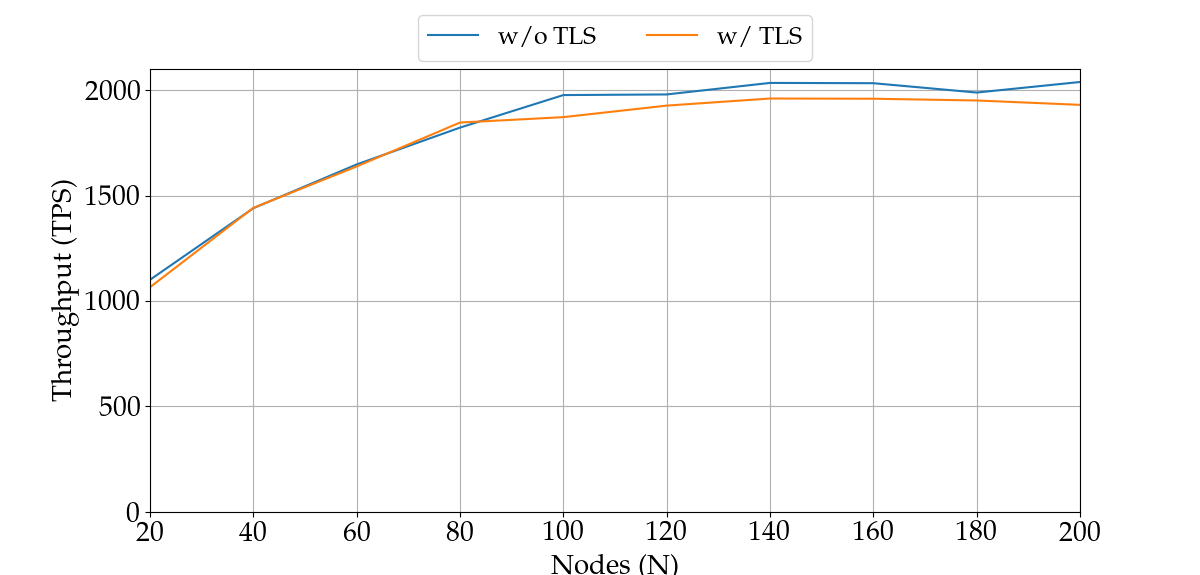}
	\caption{Throughput of \solution when deployed in a geo-distributed AWS environment of 10 countries across 5 continents with and without TLS\label{fig:geo}}
\end{figure}

Figure~\ref{fig:latency-geo} shows the latency of transactions of \solution in the aforementioned geo-distributed environment as the number of nodes increases. 
We can observe that the latency increases with the number of nodes.
We observe similar minimum latencies across all system sizes but the $99^{th}$ percentile indicates that some requests can take much longer especially at large scale: the transactions take less than 10 seconds to execute on up to 40 nodes while they take less than 40 seconds to execute at 200 nodes. 
It is important to note that these latencies can be viewed as time for a transaction to become final: thanks to our deterministic byzantine fault tolerance consensus (\cref{sec:dbft}), transactions are committed (and thus final) as soon as the consensus ends and the superblock is selected. This differs from classic blockchains~\cite{yakovenko2018solana,HMW} whose consensus is reached after the block is appended and after more ``block confirmations'' occur.
Interestingly, this increasing latencies do not prevent the throughput from scaling with the number of machines as we discussed earlier (Fig.~\ref{fig:geo}).
This is precisely due to the superblock optimization: As more machines participate, more blocks get proposed and running consensus takes more time, which increases the latency, however, the number of transactions decided per consensus instance also increases, which guarantees scalability.

\begin{figure}[th]
	\centering
	\includegraphics[scale=0.38,clip=true,viewport=25 0 1000 410]{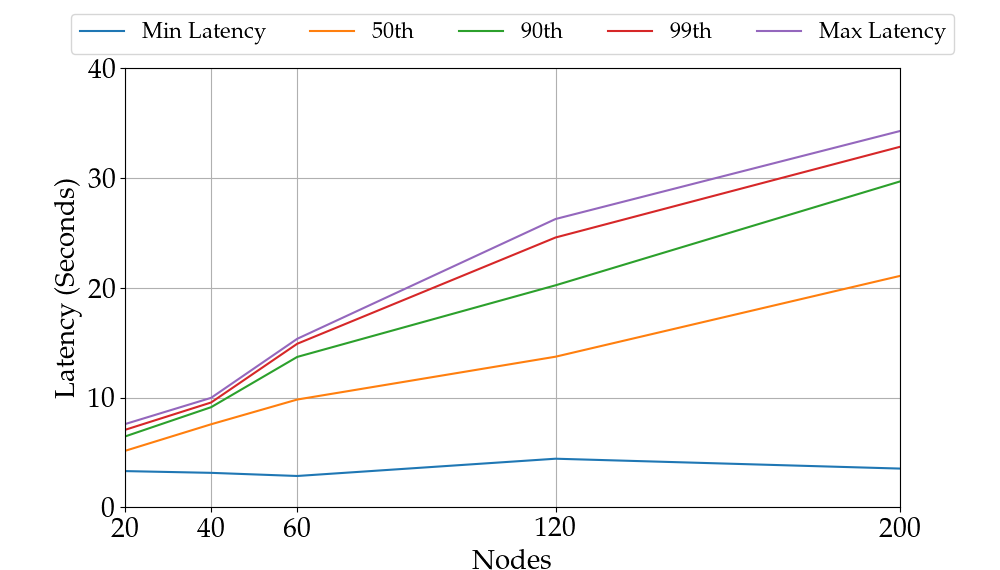}
	\caption{Latency of \solution when deployed in a geodistributed AWS environment of 10 countries across 5 continents\label{fig:latency-geo}}
\end{figure}

\subsection{Twitter DApp evaluation}\label{sec:dapp}

To evaluate how fast \solution can treat smart contract invocations under a realistic workloads, 
we ran the Twitter DApp of the {\sc Diablo} framework~\cite{BGG21} on top of 4 consensus nodes and 4 SEVM nodes and 
report on the performance as time elaspses. {\sc Diablo} is a benchmark suite for blockchains that features DApps written in different smart contract programming languages.
It features a Twitter DApp written in Solidity whose smart contract sends 140-character messages following a burst workload experienced during the release of the \emph{Castle in the Sky} anime. 

\begin{figure}[ht]
	\centering
	\includegraphics[scale=0.3]{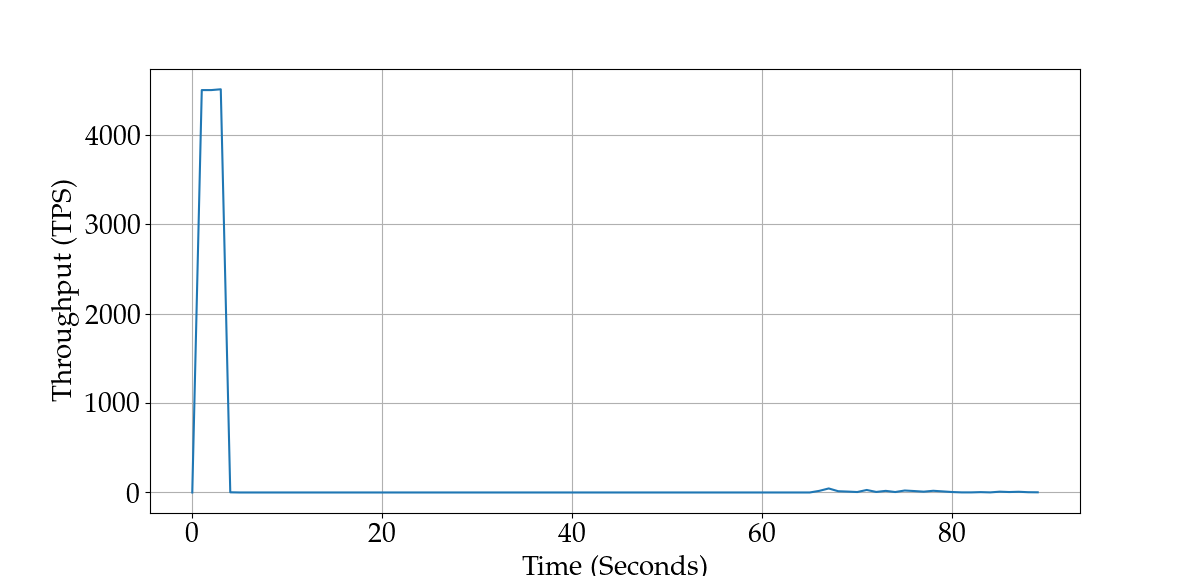}
	\caption{Twitter DApp throughput when running on top of \solution\label{fig:twitter}
	}
\end{figure}

Figure~\ref{fig:twitter} depicts the performance results obtained while running this Twitter DApp on top of \solution.
To achieve the high burst Twitter workload of 143,000\,TPS, we had to deploy as many clients as SEVM nodes.
We can observe that \solution handles the load burst of the Twitter workload by committing up to 4509 transactions per second.
During the same experiments, we observed that a latency with $50^{th}$, $90^{th}$, $99^{th}$ percentiles as 5.367, 8.171 and 195.337 seconds.
Despite this high workload burst, \solution continues executing properly, committing other transactions at a slower rate with a minor peak around 67 seconds.
This demonstrates that \solution can deliver high throughput and handles workload bursts as can be found in realistic applications.


\subsection{Linear speedup with sharding}\label{sec:sharding}

As we explain in the related work, sharding proved instrumental in boosting performance 
of blockchains. The idea of sharding stems from distributed database research where
a database table gets split into sub-tables, each sub-table is stored on a partition of machines called a \emph{shard}.
When a request on a entry of the table is issued, the shard managing this entry is responsible for handling this request. 
Hence, requests on distinct entries can execute in parallel on distinct shards, allowing performance to increase (ideally linearly) 
with the number of shards.

In blockchain, there exists various ways of implementing sharding.
With traditional blockchain sharding, each shard is responsible of a subset of the transactions
and runs an independent consensus instance to agree upon the ordering of these transactions.
This is the approach taken by Dfinity~\cite{HMW}, Elastico~\cite{LNZ16}, RapidChain~\cite{ZMR18} and Omniledger~\cite{KJGG18}.
With verification sharding, every shard participates in the same consensus instance and stores the 
global state, however, each shard is responsible of verifying a different set of transactions.
This is the approach taken by Red Belly Blockchain~\cite{CNG21}.

We implemented sharding on top of \solution using a traditional blockchain sharding. 
Shards can be spawned on demand by a dedicated built-in smart contract, hence resulting
in a beacon chain with shard chains structure similar to the upcoming Ethereum 2~\cite{Eth2}.
Participants can deposit some assets on the beacon chain in order to spawn a new shard, which creates
a new blockchain instance where the participants have an account with a balance corresponding to the assets they 
deposited on the beacon chain---this typically allows participants to transact within the shard without having to 
transact on the beacon chain.

\begin{table}[ht]
    \centering
    \setlength{\tabcolsep}{12pt}
    \begin{tabular}{r|lll}
        \toprule
         \#Shards &  1  & 2  & 3  \\
         \midrule
         Throughput & 470.79\,TPS & 951.54\,TPS & 1405.42\,TPS \\
         \bottomrule
    \end{tabular}
    \vspace{0.5em}
    \caption{\solution performance scales almost linearly with the number of shards\label{fig:shard}}
    \label{table:sharding}
\end{table}

Table~\ref{table:sharding} presents the performance obtained on up to 32 small machines with 2 vCPU and 8\,GiB memory running Ubuntu 20.04
when we increase the number of shards. The beacon chain (1 shard) delivers 470.79\,TPS but when coupled with two other shards (3 shards)
it delivers $2.98\times$ higher performance, which demonstrates a speedup very close to linear. This is no suprise given 
that each shard runs on separate machines and use distinct resources. Although sharding makes the implementation of cross-shard 
transactions quite complex, 
we could offer cross-shard transactions by decoupling these transactions into separate withdrawals an credits, as was previously 
suggested in Prism~\cite{WWB20}, for applications where the atomicity of transactions is not a requirement.

\section{Related Work}\label{sec:rw}

To decentralize the computation from large data stores~\cite{DHJ07,CDG08,CDE13}, various work focused on user/edge-centric computing~\cite{GME15}. Solid~\cite{MSH16} distributes private data into pods whose user manages permissions. Lightweight middleware~\cite{JLJ19} exploit WebRTC
to avoid downloading a blockchain. These solutions do not offer the execution transparency  of blockchains, which is key to prevent user manipulations~\cite{Bria20}.



\paragraph{Payment blockchains}
Some blockchains are designed for high transactions throughput at large scale, but were not designed to support DApps~\cite{GHM17,CNG18,SDV19,LLM19,GRH20}. This is the case of ResilientDB~\cite{GRH20} that exploits topology-awareness to parallelize consensus executions, the
Red Belly Blockchain~\cite{CNG18} that shares our superblock optimization or Mir~\cite{SDV19} that 
 deduplicates transaction verifications.
Stellar~\cite{LLM19} is an in-production blockchain running in a geodistributed setting while
Algorand~\cite{GHM17} introduced the sortition our membership change builds upon. 
Although progresses are being made towards smart contract support, these blockchains do not run DApps.

\remove{ 

\vincent{Mention non-blockchain work for decentralized apps~\cite{MSH16}.}

\vincent{Mention LightChain that combines Tendermint with EVM but cannot be deployed on a distributed system and Ethermint that is in its pre-alpha release. Talk about EOS (https://hackernoon.com/eos-an-architectural-performance-and-economic-analysis-43a466064712), which was claimed to have close to Ethereum performance (20TPS) under realistic packet losses at https://cdn0.tnwcdn.com/wp-content/blogs.dir/1/files/2018/11/EOS\_Report.pdf. TRON? NEO runs on 7 nodes and achieves 1000TPS with smart contracts written in C\#, Java and Python?(https://www.skalex.io/neo-vs-ethereum-the-difference/)
It looks like the most popular DApp would require 4 TPS, close to a third of Ethereum overall throughput.} 

\vincent{WebRTC can be used for edge devices to communicate. Some may lack cryptography. Solid has another goal in mind, offering the possibility for users to store data in a pod they have access to (it is centralized?). What about freenet?}

\vincent{Discuss PoS and how nodes are selected so that there are sufficiently many running the blockchain.} 

\vincent{Explain why WASM is slow~\cite{JPB19} and why Move is slow~\cite{BKTFV19}.}

Some ideas were proposed to improve the performance of existing blockchain by using off-chain techniques. This requires a form of interoperability that can be simplified through the use of smart contracts~\cite{FBP20}.
However, this idea bypasses the scalability problem that plagues existing blockchain rather than solving it. 

} 

\paragraph{Fast smart contract executions}
Solana~\cite{yakovenko2018solana} builds upon Proof-of-History (PoH) to reduce message overhead. 
Solana provides high performance by offering optimistic consensus, hoping that a single block gets notarized at each index, and
thanks to the vertical scaling of its validator nodes: validator nodes feature 1TB SSD disk and 2 Nvidia V100 GPUs for benchmarking~\cite{SolanaPerf} and 128 GiB memory is required~\cite{SolanaValidReq}, which is twice as much as the most powerful machines we used
in our experiments. 

\paragraph{Towards byzantine fault tolerant blockchains}
Upper-bounding the number $f$ of byzantine failures allow to solve consensus to avoid forks.
Ethereum comes with proof-of-work and 
proof-of-authority (PoA) in the two mainstream Ethereum programs, called {\ttt parity} and {\ttt geth}. The idea of proof-of-authority is to have a set of $n$ permissioned validators, among which $f$ can be malicious or \emph{byzantine}~\cite{PSL80}, that generate new blocks~\cite{noauthor_ethereum_nodate,noauthor_poacore_nodate,BBK18}.
Unfortunately, 
both proof-of-authority protocols in {\ttt parity} and {\ttt geth} have recently been shown vulnerable to the attack of the clone when messages take longer than expected~\cite{EGJ20}. 

\paragraph{Tolerating unpredictable bounded delays}
To cope with unpredictable message delays, blockchains cannot rely on synchrony.
%
%
Cosmos~\cite{CFM21}, sometimes referred to as the Internet of Blockchain, is a network of interoperable blockchains that builds upon the Tendermint state machine replication~\cite{Buc16}.
Ethermint~\cite{Ethermint} is a blockchain that combines the partially synchronous Tendermint consensus protocol~\cite{BKM18} with the EVM. 
Ethermint is still under active development~\cite{EthermintPage} and we could not benchmark it.
In particular, we found some issues that prevented us from deploying it
like 
a nonce management limitation, which resulted in rejecting consecutive transactions sent in a short period of time~\cite{EthermintIssue}. Other researchers who managed to deploy an older version of Ethermint, reported a peak throughput of 100\,TPS obtained with a single validator node~\cite{DBD18}, however, Tendermint reached 438\,TPS~\cite{CFM21}.

Zilliqa~\cite{ZILLIQA} is a blockchain that supports smart contracts and reaches consensus with PBFT~\cite{CL02}. 
We are not aware of any performance evaluation of Zilliqa but its  state machine, Scilla, executes non Turing complete programs but slower than the EVM when the state size increases~\cite{Scilla}.
Therefore, it is unlikely that it would yield higher throughputs than our \solution for large state sizes.
Chainspace~\cite{albassam2017chainspace} introduced a distributed atomic commit protocol termed S-BAC for  smart contract transactions. Coupled with the BFT-SMaRt~\cite{BSA14} consensus protocol, Chainspace can support trustless use of DApps. However, 
it has only been able to achieve up to 350\,TPS, offering a limited support for DApps.

\paragraph{Evaluations of BFT blockchains}
Quorum~\cite{jpmorganchase_quorum} is a blockchain that supports Ethereum smart contracts and reaches consensus with the Istanbul Byzantine Fault Tolerant (IBFT) consensus algorithm. 
Just like \solution, the byzantine fault tolerance of Quorum makes it well-suited for mobile devices to interact wth DApps securely without downloading the blockchain. Moreover, it seems that few optimizations could help it treat a large number of transactions per second~\cite{BSK18}. Unfortunately, Quorum
loses requests (\cref{sec:comparison}).

SBFT~\cite{GAG19} is a byzantine fault tolerant consensus algorithm that exploits threshold signatures to reduce the communication complexity of PBFT but commits, like PBFT, at most one proposed 
block per consensus instance. It was shown to reach consensus on 378 smart contract requests per second when deployed within one continent and 172 requests per second across multiple continents. 
Concord~\cite{concord} is a blockchain that combines a lightweight C++ implementation of the EVM with SBFT, however, its publicly available version has open issues~\cite{bugconcord20} that prevent it from being deployed on distinct physical machines but we showed that Concord, although slower than \solution, reached the encouraging throughput of 1000\,TPS on 4 nodes within the same physical machine. It could be the case that future versions will scale. 

\paragraph{Sharding}
As we presented in \cref{sec:sharding}, one can multiply the performance of a blockchain, including \solution, by adding more shards.
%
Dfinity~\cite{HMW} coined as the Internet Computer is an open permissioned blockchain.
Dfinity scales horizontally thanks to its committees, with an assumed majority of correct members, that act like shards. It achieves high block production throughput thanks to concurrent execution of cannisters, isolated pieces of code compiled to WASM that act as smart contracts and offer low latency to read requests. 
The difference with \solution, is that a block produced is not necessarily final: a verifiable random function is used 
to rank block proposers and if an adversarial one is rank highest, it could propose conflicting blocks that are notarized, hence leading to a fork.
Additional assumptions are needed for the nodes to agree on the chain with the highest block weight.
\solution solves consensus before appending blocks.

The {\ttt move} approach~\cite{FBP20} moves accounts and computation from one smart contract enabled blockchain to another. The smart contract of the first blockchain is locked before any participant creates it in the second blockchain. This allows to scale the throughput of the congested DApp CryptoKitties with the number of shards.
Eth2~\cite{Eth2} relies on a beacon chain and will feature 64 shard chains to improve the scalability of Ethereum. The uniqueness of the beacon chain guarantees a consistent view of current state but cannot handle accounts and smart contracts. The validators of the shard chain do not need to download and 
run data for the entire network.

{\sc Prism}~\cite{PRISM} is a proof-of-work blockchain that shards the blockchain into $m$ voter chains and exploits three types of blocks in a block tree.
The voter blocks are used to vote for proposer blocks grouped per level in the block tree.
Once a proposer block is elected, transaction blocks that are pointed to by the proposer block are committed.
Prism peaks at 19K\,TPS by 
ignoring the eager validation completely,
which exposes it to DoS attacks.
To ensure the copy of the blockchain state is not corrupted, a user needs first to download the block headers, a time- and space-consuming task ill-suited for running DApps on handheld devices.


\section{Conclusion}\label{sec:conclusion}

\solution is a collaborative blockchain compatible with the largest ecosystem of DApps that treats thousands of requests per second and scales to hundreds of machines world-wide. It builds upon recent advances in deterministic byzantine fault tolerance (BFT) consensus algorithms to avoid forks and offers finality without having to wait for block confirmations.
Its key novelties lie in (i)~having smart contract execution nodes collaborating to minimize validations and in (ii)~having
consensus nodes collaborating to combine their block proposals into a committed superblock of smart contracts.

Our experiments demonstrate that \solution is an appealing BFT middleware for individuals to exchange in a fully distributed fashion.
We showed that one instance of \solution handles a peak throughput of 4500\,TPS under a Twitter DApp. 
We also showed how to interconnect different shard instances of \solution to scale almost linearly, to support 
potentially as many DApps as shards. This,
combined with the ability of one instance (or shard) to scale to hundreds of nodes spread in 5 continents makes \solution an appealing BFT middleware for DApp services. 
This new model departs from the centralization trend of the sharing economy services to offer more transparent and fault tolerant services to individuals.

\iftechrep
 \subsection*{Acknowledgments}
 \vincent{Thanks the students for the sharding experiment.}
 This research is supported under Australian Research Council Future Fellowship funding scheme (project 
  number 180100496) entitled ``The Red Belly Blockchain: A Scalable Blockchain for Internet of Things''.
\fi


\bibliographystyle{plain}
\bibliography{reference.bib} 

\appendix
\subsection{Smart Contract to Reconfigure Nodes}\label{appendix}

\onecolumn
{\tiny
\begin{lstlisting}[language=Solidity,basicstyle=\scriptsize\LSTfont,escapechar = ?]
pragma solidity ^0.4.0;
pragma experimental ABIEncoderV2;

contract Committee {
    
     string [] committee;
     address public chairperson;
     uint member;
     mapping(uint => uint) private id;
     mapping(string => bool) hasIp;
     mapping(string => bool) hasCalled;
     mapping(address => string) WallettoIP;
     constructor() public {
       chairperson = msg.sender;
       for (uint d=0; d < 10; d++){
          id[d] = 0;
       }
     }
     uint [] count;
     string [] public selected;
     uint32 [] private digits;
     uint threshold;
     uint size;
     uint select;
     uint option;
     event notify(string []);
     
     // initial set of node ips and the size of the committee is parse by the chairperson
     function addIp (string [] memory ip, uint members, string[] memory wallets) public {
        committee.length = 0;
         require(
            msg.sender == chairperson,
            "Only chairperson can give right to vote."
        );
        for (uint t = 0; t < ip.length; t++){
           committee.push(ip[t]);
           WallettoIP[parseAddr(wallets[t])] = ip[t];
           hasIp[ip[t]] = true;
           hasCalled[ip[t]] = false;
        }
        size = committee.length;
        member = members;
        // members is the number of participants per committee
     }
     
     // upon parsing a random seed and once a threshold of calls have been made, a random committee is selected
     function createCommittee (uint val) public {
         // if the caller of this function is in the list of ips added by the chairperson, and if they haven't call this function
         // before - because we don't want t+1 be reached by a malicious node calling this function multiple times
         require(hasIp[WallettoIP[msg.sender]] == true && hasCalled[WallettoIP[msg.sender]] == false);
         hasCalled[WallettoIP[msg.sender]] = true;
         id[val] = id[val] + 1;
         // use size instead of member
         threshold = (size - 1)/3 + 1;
         if (id[val] == threshold){
           select = size/member; // total number of nodes divided by the ones that should be in the committee. 
           // gives the possible number of different committees
           for(uint b =0; b<select;b++){
               count.push(b);
           }
           // get the number of options to the range of select
           option = val % select;
           for (uint a = 0; a<count.length; a++){
            if(option==count[a]){
                for (uint j=count[a]*member; j < (count[a] + 1)*member; j++){
                    selected.push(committee[j]);
                }   
            }
           }
           emit notify(selected);
           selected.length = 0;
           count.length = 0;
           option = 0;
           for (uint d=0; d < 10; d++){
                id[d] = 0;
            }
            for (uint num=0; num < size; num++){
                hasCalled[committee[num]] = false;
            }
        }
            
     }
     
    // a function to convert addresses of type string to type address     
    function parseAddr(string memory _a) public returns (address _parsedAddress) {
        bytes memory tmp = bytes(_a);
        uint160 iaddr = 0;
        uint160 b1;
        uint160 b2;
        for (uint i = 2; i < 2 + 2 * 20; i += 2) {
            iaddr *= 256;
            b1 = uint160(uint8(tmp[i]));
            b2 = uint160(uint8(tmp[i + 1]));
            if ((b1 >= 97) && (b1 <= 102)) {
                b1 -= 87;
            } else if ((b1 >= 65) && (b1 <= 70)) {
                b1 -= 55;
            } else if ((b1 >= 48) && (b1 <= 57)) {
                b1 -= 48;
            }
            if ((b2 >= 97) && (b2 <= 102)) {
                b2 -= 87;
            } else if ((b2 >= 65) && (b2 <= 70)) {
                b2 -= 55;
            } else if ((b2 >= 48) && (b2 <= 57)) {
                b2 -= 48;
            }
            iaddr += (b1 * 16 + b2);
        }
        return address(iaddr);
    }  
}
\end{lstlisting}
}

\end{document}